\def\paren#1{\left( #1 \right)}
\def\acc#1{\left\{ #1 \right\}}
\author{Louis Esperet\thanks{Laboratoire G-SCOP (Grenoble-INP, CNRS),
    Grenoble, France.}, Sylvain Gravier\thanks{Institut Fourier
    (Universit\'e
    Joseph Fourier, CNRS), St Martin d'H\`eres, France.}, Micka\"el
  Montassier\thanks{LaBRI (Universit\'e de Bordeaux, CNRS),
    Talence, France.},\\ Pascal Ochem\thanks{LRI (Universit\'e Paris Sud,
    CNRS), Orsay, France.}, Aline Parreau{$^\ddagger$\!\!}}
\title{Locally identifying coloring of graphs\thanks{This research is
    supported by the ANR IDEA, under contract {ANR-08-EMER-007},
    2009-2011.}}  
\date{\today}
\newenvironment{proof}{\par \noindent \textsc{Proof.} }{\hfill$\Box$\medskip}
\newtheorem{theorem}{Theorem}
\newtheorem{corollary}[theorem]{Corollary}
\newtheorem{proposition}[theorem]{Proposition}
\newtheorem{lemma}[theorem]{Lemma}
\newtheorem{conjecture}[theorem]{Conjecture}
\newtheorem{observation}[theorem]{Observation}
\newtheorem{question}[theorem]{Question}
\begin{document}

\maketitle

\renewcommand{\abstractname}{Abstract}

\begin{abstract}
 We introduce the notion of locally identifying coloring of a graph. A
 proper vertex-coloring $c$ of a graph $G$ is said to be {\em locally
   identifying}, if for any adjacent vertices $u$ and $v$ with
 distinct closed neighborhood, the sets of colors that appear in the
 closed neighborhood of $u$ and $v$ are distinct. Let
 $\chi_{\text{lid}}(G)$ be the minimum number of colors used in a
 locally identifying vertex-coloring of $G$. In this paper, we
 give several bounds on $\chi_{\text{lid}}$ for different families of
 graphs (planar graphs, some subclasses of perfect graphs, graphs with
 bounded maximum degree) and prove that deciding whether
 $\chi_{\text{lid}}(G)=3$ for a subcubic bipartite graph $G$ with
 large girth is an NP-complete problem.
\end{abstract}

\section{Introduction}

In this paper we focus on colorings allowing to distinguish the
vertices of a graph. In \cite{HS95}, Hor\v{n}\'ak and Sot\'ak
considered edge-coloring of a graph such that $(i)$ the edge-coloring
is proper (i.e. no adjacent edges receive the same color) and $(ii)$
for any vertices $u, v$ (with $u\ne v$) the set of colors assigned to
the edges incident to $u$ differs from the set of colors assigned to
the edges incident to $v$. Such a coloring is called a {\em
  vertex-distinguishing proper edge-coloring}. The minimum number of
colors required in any vertex-distinguishing proper edge-coloring of
$G$ is called the {\em observability} of $G$ and was studied for
different families of graphs
\cite{BRS03,BS97,CHS96,DTS02,FLS96,HS95,HS97}. This notion was then
extended to {\em adjacent vertex-distinguishing edge-coloring} where
Property $(ii)$ must be true only for pairs of adjacent vertices; see
\cite{ABN06,Hat05,ZLW02}.

In the present paper we introduce the notion of {\em locally
  identifying colorings}: a vertex-coloring is said to be {\em locally
  identifying} if $(i)$ the vertex-coloring is proper (i.e. no
adjacent vertices receive the same color), and $(ii)$ for any pair of
adjacent vertices $u,v$ the set of colors assigned to the closed
neighborhood of $u$ differs from the set of colors assigned to the
closed neighborhood of $v$ whenever these neighborhoods are
distinct. The {\em locally identifying chromatic number} of the graph
$G$ (or lid-chromatic number, for short), denoted by
$\chi_{\text{lid}}(G)$, is the smallest number of colors required in
any locally identifying coloring of $G$. In the following we study the
parameter $\chi_{\text{lid}}$ for different families of graphs, such
as bipartite graphs, $k$-trees, interval graphs, split graphs,
cographs, graphs with bounded maximum degree, planar graphs with high
girth, and outerplanar graphs.

\medskip
Let $G=(V,E)$ be a graph. For any vertex $u$, we denote by $N(u)$ its
neighborhood and by $N[u]$ its \emph{closed neighborhood} ($u$
together with its adjacent vertices) and by $d(u)$ its degree.  Let
$c$ be a vertex-coloring of $G$. For any $S\subseteq V$, let $c(S)$ be
the set of colors that appear on the vertices of $S$. More formally, a
locally identifying coloring of $G$ (or a lid-coloring, for short) is
proper vertex-coloring $c$ of $G$ such that for any edge $uv$,
$N[u]\ne N[v] \Rightarrow c(N[u])\ne c(N[v])$.  Observe that the
lid-chromatic number of a graph $G$ is the maximum of the
lid-chromatic numbers of its connected components. Hence, in the
proofs of most of our results it will be enough to restrict ourselves
to connected graphs. A graph $G$ is {\em $k$-lid-colorable} if it
admits a locally identifying coloring using at most $k$ colors. Notice
the following:

\begin{observation}\label{obs:2lid}
A connected graph $G$ is 2-lid-colorable if and
only if $G$ has at most two vertices.
\end{observation}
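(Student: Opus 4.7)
The plan is to prove both directions; the forward direction (at most two vertices implies 2-lid-colorable) is immediate, so the real content lies in the converse.

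For the easy direction, if $G$ has one vertex, a single color suffices trivially; if $G$ is connected on two vertices, the two vertices are adjacent and can be assigned the two colors, giving a proper coloring that is vacuously locally identifying since there is only one edge $uv$ with $N[u]=N[v]=\{u,v\}$.

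For the converse, I would argue by contradiction: suppose $G$ is connected, has at least three vertices, and admits a locally identifying coloring $c$ with colors $\{1,2\}$. Since $G$ is connected with $|V(G)|\geq 3$, it contains an induced or non-induced path on three vertices, say $u-v-w$. Being proper and using only two colors forces $c(u)\neq c(v)$ and $c(v)\neq c(w)$, hence $c(u)=c(w)$. Then both $c(N[u])$ and $c(N[v])$ already contain the two colors (via the edge $uv$), so $c(N[u])=c(N[v])=\{1,2\}$.

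The locally identifying condition then forces $N[u]=N[v]$. In particular, $w\in N[v]\subseteq N[u]$, so $uw$ is an edge, and $\{u,v,w\}$ induces a triangle. But a triangle cannot be properly colored with two colors, contradicting the fact that $c$ is proper. The main (and essentially only) step to be careful about is the invocation of a $P_3$ in $G$, which is justified by the connectivity of $G$ together with $|V(G)|\geq 3$; after that, the argument is a short combinatorial contradiction.
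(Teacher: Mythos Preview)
Your proof is correct and follows essentially the same approach as the paper's. Both arguments hinge on the observation that for any edge $uv$ in a proper $2$-coloring one has $c(N[u])=c(N[v])=\{1,2\}$, so the lid condition forces $N[u]=N[v]$, which together with a third vertex yields a triangle and contradicts proper $2$-colorability; the paper simply runs these steps in the reverse order (first ruling out $N[u]=N[v]$ via the triangle argument, then deriving the contradiction), and does not bother isolating a $P_3$ explicitly.
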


\begin{proof}
Let $G$ be a connected graph with a $2$-lid-coloring $c$ and at least
3 vertices. Consider an edge $uv$. Then we have $N[u]\neq N[v]$, since
otherwise $G$ would contain a triangle and then we would have
$\chi_{\text{lid}}(G)\geq \chi(G) \geq 3$. Since $c$ is a 2-coloring
and $N[u]$ and $N[v]$ both contain $u$ and $v$, we have
$c(N[u])=c(N[v])=\{c(u),c(v)\}$, a contradiction.

The other implication is trivial.
\end{proof}

\medskip
Note that locally identifying coloring is not hereditary. For
instance, if $P_n$ denotes the path on $n$ vertices, then
$\chi_{\text{lid}}(P_5)=3$ whereas $\chi_{\text{lid}}(P_4)=4$.

\medskip
In Section \ref{sec:bip}, we prove that every bipartite graph has
lid-chromatic number at most 4.  Moreover, deciding whether a
bipartite graph is 3-lid-colorable is an NP-complete problem, whereas
it can be decided in linear time whether a tree is
3-lid-colorable.

\medskip 
In general, $\chi_{\text{lid}}$ is not bounded by a function of the
usual chromatic number $\chi$. Nevertheless it turns out that for
several nice classes of graphs such a function exists: we study
$k$-trees (Section \ref{sec:ktree}), interval graphs (Section
\ref{sec:inter}), split graphs (Section \ref{sec:split}), cographs
(Section \ref{sec:cographs}), and give tight bounds in each of these
cases. We also conjecture that every chordal graph $G$ has a
lid-coloring with $2 \chi(G)$ colors.

\medskip 
Section \ref{sec:bmaxd} is dedicated to graphs with bounded
maximum degree. We prove that the lid-chromatic number of graphs with
maximum degree $\Delta$ is $O(\Delta^3)$ and that there are examples
with lid-chromatic number $\Omega(\Delta^2)$.

\medskip
In Section \ref{sec:planar}, we study graphs with a
topological structure. Our result on 2-trees does not give any
information on outerplanar graphs, since lid-coloring is not monotone
under taking subgraphs. So we use a completely different strategy to
prove that outerplanar graphs and planar graphs with large girth have
lid-colorings using a constant number of colors.

\medskip
Finally, in Section \ref{conn}, we propose a tool allowing to
extend the lid-colorings of the 2-connected components of a graph to
the whole graph.

\section{Bipartite graphs}\label{sec:bip}

This section is dedicated to bipartite graphs. The main
interest of the study of bipartite graphs here comes from the
following lemmas:

\begin{lemma}\label{lem:bip1}
If a connected graph $G$ satisfies $\chi_{\text{lid}}(G)\le 3$, then $G$ is
either a triangle or a bipartite graph.
\end{lemma}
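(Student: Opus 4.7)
My plan is to prove the contrapositive: if $G$ is connected, not bipartite, and $\chi_{\text{lid}}(G)\le 3$, then $G$ must be a triangle. Fix a 3-lid-coloring $c:V(G)\to\{1,2,3\}$ and let $C$ be a shortest odd cycle of $G$. Then $C$ is either a triangle or an induced odd cycle of length $n\ge 5$, because any chord of an odd cycle would split it into one odd and one even smaller cycle, contradicting the choice of $C$.

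The triangle case is short. If $C=xyz$, then $c(x),c(y),c(z)$ are pairwise distinct, so $c(N[x])=c(N[y])=c(N[z])=\{1,2,3\}$. The lid condition applied to each edge of the triangle then forces $N[x]=N[y]=N[z]$. Any vertex in this common neighborhood outside $\{x,y,z\}$ would be adjacent to all three of $x,y,z$, so its color would have to lie outside $\{1,2,3\}$, which is impossible. Hence $N[x]=\{x,y,z\}$, and by connectivity $G$ is the triangle.

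The heart of the argument is ruling out an induced odd cycle $C=v_1\cdots v_n$ with $n\ge 5$. Set $X_i=c(N[v_i])$. Since $C$ is induced, $v_{i-1}\in N[v_i]\setminus N[v_{i+1}]$, so $N[v_i]\ne N[v_{i+1}]$ and the lid property yields $X_i\ne X_{i+1}$ for every edge of $C$. Call position $i$ \emph{good} when $c(v_{i-1}),c(v_i),c(v_{i+1})$ are all distinct (forcing $X_i=\{1,2,3\}$), and \emph{bad} otherwise; in the bad case, the three cycle-vertices at $i$ use only two colors, forming a pair $P$, so $X_i\supseteq P$ and $X_i\in\{P,\{1,2,3\}\}$. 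If two consecutive positions are good, we already have $X_i=X_{i+1}=\{1,2,3\}$, contradicting the lid property. Otherwise, I analyze a maximal run of bad positions of length $\ell$ between two goods. Along such a run the coloring of $C$ is $2$-periodic with a single pair $P$, so $X_k\in\{P,\{1,2,3\}\}$ throughout. Propagating the constraint $X_k\ne X_{k+1}$ along the run, together with the boundary condition $X\ne\{1,2,3\}$ at both endpoints (needed to differ from the neighboring good's $X=\{1,2,3\}$), forces the values $X_k$ to alternate between $P$ and $\{1,2,3\}$, starting and ending at $P$. This makes $\ell$ odd.

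With $g$ good and $b=n-g$ bad positions, the $g$ bad runs all have odd length, so $b\equiv g\pmod 2$, giving $n=g+b\equiv 2g\equiv 0\pmod 2$, which contradicts the parity of $n$. The marginal cases $g=0$ and $g=1$ are absorbed into the same scheme: $g=0$ would require a proper 2-coloring of an odd cycle, and $g=1$ leaves a single bad run of length $n-1$ which must be odd. The one nontrivial step is the alternation argument inside a bad run: everything else is a direct translation of the lid definition into local constraints on the $X_i$, but the alternation (and hence the parity of $\ell$) is what turns the lid requirement on an induced odd cycle into a clean parity contradiction.
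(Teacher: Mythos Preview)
Your argument is correct. The triangle case is handled cleanly, and on an induced odd cycle $C$ of length $n\ge 5$ your good/bad classification together with the alternation of the $X_k$ along each bad run does yield the parity contradiction. The edge cases are fine too: $g=0$ forces a proper $2$-colouring of an odd cycle, and for $g=1$ the constancy of the pair $P$ along the single bad run already clashes with the fact that the unique good position sees all three colours (equivalently, the run length $n-1$ cannot be odd).

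Your route, however, is quite different from the paper's and considerably longer. The paper bypasses any cycle analysis by defining the auxiliary colouring $c'(x)=|c(N[x])|\in\{2,3\}$ and observing that it is a proper $2$-colouring of $G$ whenever no edge $uv$ satisfies $N[u]=N[v]$: if $c'(u)=c'(v)=3$ then $c(N[u])=c(N[v])=\{1,2,3\}$, and if $c'(u)=c'(v)=2$ then $c(N[u])=c(N[v])=\{c(u),c(v)\}$, so in either case the lid condition forces $N[u]=N[v]$. A short argument then shows that an edge with $N[u]=N[v]$ forces $G$ to be a triangle. In effect, the paper proves globally and in one line the fact that you extract locally on the cycle through the alternation of the $X_k$: adjacent vertices cannot both have $|c(N[\cdot])|=2$ nor both have $|c(N[\cdot])|=3$. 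Your good/bad dichotomy records only the pattern of colours \emph{on the cycle}, not the size of $c(N[v_i])$ in $G$, which is why you need the extra structural claim that the pair $P$ is constant along a bad run before the alternation kicks in. The paper's approach also has the bonus of exhibiting an explicit bipartition, which feeds directly into Lemma~\ref{lem:bip2}.
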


\begin{proof}
Consider a 3-lid-coloring $c$ of $G$ with colors $1,2,3$. By
Observation~\ref{obs:2lid}, we can assume that $G$ has at least three
vertices.

Define the coloring $c'$ by $c'(x)=|c(N[x])|$ for any vertex
$x$. Since $G$ is connected, $c'(x) \in \{2,3\}$ for any vertex $x$. If
two adjacent vertices $u,v$ satisfy $c'(u)=c'(v)=3$, then
$c(N[u])=c(N[v])=\{1,2,3\}$, and if $c'(u)=c'(v)=2$, then
$c(N[u])=c(N[v])=\{c(u),c(v)\}$. It follows that $c'$ is a proper
2-coloring of $G$ unless $N[u]=N[v]$ for some edge $uv$. In this case,
since $G$ does not consist of the single edge $uv$, there exists a
vertex $w$ adjacent to $u$ and $v$. But then
$c(N[u])=c(N[v])=c(N[w])=\{1,2,3\}$, which implies that
$N[u]=N[v]=N[w]$. This is only possible if $G$ is
a triangle.
\end{proof}

Indeed, more can be said about the color classes in a 3-lid-coloring
of a (bipartite) graph:

\begin{lemma}\label{lem:bip2}
Let $G$ be a 3-lid-colorable connected bipartite graph on at least
three vertices, with bipartition $\{U,V\}$, and let $c$ be a
3-lid-coloring of $G$ with colors $1,2,3$. Then $G$ has a vertex $u$
with $c(N[u])=\{1,2,3\}$ and if $u\in U$, then $c(U)=\{c(u)\}$ and
$c(V)=\{1,2,3\} \setminus \{c(u)\}$.
\end{lemma}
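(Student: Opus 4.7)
My plan is to reuse the auxiliary coloring $c'(x)=|c(N[x])|$ from the proof of Lemma~\ref{lem:bip1}. Since $G$ is connected, bipartite, and has at least three vertices, that argument already shows that $c'$ is a proper $2$-coloring of $G$ taking values in $\{2,3\}$. Because a connected bipartite graph admits a unique proper $2$-coloring (up to swapping the two sides), $c'$ must be constant on each side of the bipartition $\{U,V\}$. Up to swapping $U$ and $V$, I may assume $c'(U)=\{3\}$ and $c'(V)=\{2\}$.

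From $c'(U)=\{3\}$, every $u\in U$ satisfies $c(N[u])=\{1,2,3\}$, which already gives the first part of the statement. For the rest it suffices to show that $c$ is constant on $U$: once this is known, $c(U)=\{c(u)\}$ is immediate, and since $G$ is connected with at least three vertices it has no isolated vertex, so $V=\bigcup_{u'\in U}N(u')$; combined with $c(N[u'])=\{1,2,3\}$ and $c(U)=\{c(u)\}$, this forces $c(V)=\{1,2,3\}\setminus\{c(u)\}$.

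To show that $c$ is constant on $U$, I would fix $u\in U$ and any $v\in N(u)$. The equality $c'(v)=2$, together with $c(u),c(v)\in c(N[v])$, forces $c(N[v])=\{c(u),c(v)\}$, so every neighbor of $v$ receives color $c(u)$. Hence any two vertices of $U$ at graph-distance $2$ share the same color; connectivity of $G$ (any two vertices of $U$ are joined by an even walk) then lets this color propagate throughout $U$. This propagation is the only part of the argument that needs a small extra step, and it is essentially forced by bipartiteness and connectivity, so I do not expect a serious obstacle: the bulk of the statement is already built into Lemma~\ref{lem:bip1}.
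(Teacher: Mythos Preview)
Your argument is correct. The paper's proof is organized differently: it picks a single edge, finds one endpoint $u$ with $c(N[u])=\{1,2,3\}$, and then propagates the colour of $u$ outward layer by layer, arguing directly that vertices at distance two from $u$ must receive colour $c(u)$ (otherwise some neighbour $w$ of $u$ would also have $c(N[w])=\{1,2,3\}$, contradicting the lid condition), and iterating. You instead recycle the auxiliary colouring $c'$ from the proof of Lemma~\ref{lem:bip1} and use the uniqueness of the bipartition of a connected bipartite graph to conclude in one stroke that one entire side has $|c(N[\cdot])|=3$ and the other has $|c(N[\cdot])|=2$; the constancy of $c$ on the first side then follows immediately from the $|c(N[v])|=2$ condition on each $v$ in the second side. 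Your route is a bit more conceptual and avoids the explicit distance induction, at the price of depending on the \emph{proof} (not merely the statement) of Lemma~\ref{lem:bip1}; the paper's argument is self-contained but in effect re-derives locally what you have already harvested globally from that lemma.
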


\begin{proof}
Let $uv$ be an edge of $G$. We have $N[u]\neq N[v]$ because $G$ is a
bipartite connected graph on at least three vertices. Then
$c(N[u])=\{1,2,3\}$ or $c(N[v])=\{1,2,3\}$. Without loss of
generality, assume that $c(N[u])=\{1,2,3\}$ and $c(u)=1$.  Then all
the neighbors of $u$ must be colored 2 or 3, and the vertices at
distance two from $u$ must be colored 1 (otherwise there would be a
neighbor $w$ of $u$ with $c(N[w])=\{1,2,3\}$ and $N[u]\neq N[w]$).
Iterating this observation, we remark that all the vertices at even
distance from $u$ must be colored 1, while the vertices at odd
distance from $u$ must be colored either 2 or 3, which yields the
conclusion.
\end{proof}

As a corollary we obtain a precise description of
3-lid-colorable trees.

\begin{corollary}\label{tree}
A tree $T$ with at least 3 vertices is 3-lid-colorable if and only if
the distance between every two leaves is even.
\end{corollary}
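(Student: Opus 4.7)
The plan is to apply Lemma~\ref{lem:bip2} for the forward direction and give an explicit construction for the backward direction. First observe that in a connected bipartite graph, two vertices lie at even distance if and only if they belong to the same part of the bipartition; so the hypothesis that every two leaves of $T$ are at even distance is equivalent to all leaves lying in a single part.

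For the forward direction, let $c$ be a 3-lid-coloring of $T$, and let $\{U,V\}$ be the bipartition. By Lemma~\ref{lem:bip2}, after relabeling parts and colors we may assume $c(U)=\{1\}$ and $c(V)\subseteq\{2,3\}$. For every $v\in V$, all neighbors of $v$ lie in $U$ and receive color $1$, so $c(N[v])=\{1,c(v)\}$ is a two-element set. Given any edge $uv$ with $u\in U$, we have $N[u]\ne N[v]$ because $T$ has at least three vertices, so the lid-property forces $c(N[u])\ne c(N[v])$; combined with $\{1,c(v)\}\subseteq c(N[u])$ this yields $c(N[u])=\{1,2,3\}$. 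Hence every $u\in U$ has neighbors of both colors $2$ and $3$, and in particular has degree at least $2$, so no leaf lies in $U$.

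For the backward direction, assume that all leaves of $T$ lie in the part we call $V$. Set $c(u)=1$ for every $u\in U$, and build a coloring $c\colon V\to\{2,3\}$ such that every $u\in U$ has a neighbor of each color; this will give $c(N[u])=\{1,2,3\}$ for $u\in U$ and $c(N[v])=\{1,c(v)\}$ for $v\in V$, sets which are automatically distinct along every edge, so the lid-property holds (propriety being immediate since color $1$ appears only on $U$). To construct such a coloring of $V$, note that $U$ is nonempty (as $T$ has edges) and contains no leaf by assumption, so every vertex of $U$ has degree at least $2$. Root $T$ at some $u_0\in U$, color its children (all in $V$) so that both colors $2$ and $3$ appear, and then proceed top-down: for each subsequent $u\in U$ whose parent $v\in V$ has already been assigned a color $a\in\{2,3\}$, give at least one child of $u$ the color $5-a$ (possible because $u$, not being a leaf, has at least one child in $V$) and color the remaining children of $u$ arbitrarily in $\{2,3\}$.

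The only delicate point is verifying that the construction above always succeeds; the key ingredient is the absence of leaves in $U$, which guarantees that every $u\in U$ has enough neighbors in $V$ to exhibit both colors $2$ and $3$ in its closed neighborhood, so the scheme carries through and produces a valid 3-lid-coloring.
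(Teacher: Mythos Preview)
Your proof is correct. The forward direction is exactly the paper's argument, just spelled out in more detail: leaves have $|c(N[\cdot])|=2$, so by Lemma~\ref{lem:bip2} they cannot lie in the monochromatic part $U$.

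For the backward direction both you and the paper pursue the same structural idea---color the leaf-free part $U$ with a single color and the leaf part $V$ with two colors so that every $u\in U$ sees both---but the concrete constructions differ. The paper fixes a leaf $u$ and sets $c(v)=2$ when $d(u,v)$ is odd, $c(v)=1$ when $d(u,v)\equiv 0\bmod 4$, and $c(v)=3$ when $d(u,v)\equiv 2\bmod 4$; this is a closed-form rule whose lid-property is checked directly from the distance arithmetic. You instead root at a vertex of $U$ and color $V$ greedily top-down, using the fact that every $u\in U$ has at least one child to plant the missing color. The paper's formula is shorter and yields a canonical coloring; your greedy version is equally valid and makes clearer that many 3-lid-colorings exist (any assignment on $V$ in which each $u\in U$ sees both colors works).
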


\begin{proof}
Observe that for each leaf $u$ of $T$,
we have $|c(N[u])|=2$ in any proper coloring $c$ of $T$, so by
Lemma~\ref{lem:bip2} the distance between every two leaves is even.

Now assume that the distance between every two leaves of $T$ is
even, and fix a leaf $u$ of $T$. Let $c$ be the 3-coloring of $T$
defined by $c(v)=2$ if $d(u,v)$ is odd, $c(v)=1$ if $d(u,v) \equiv 0
\bmod 4$, and $c(v)=3$ if $d(u,v)\equiv 2 \bmod 4$. The coloring $c$ is
clearly proper, and we have $c(N[v])=\{1,2\}$ if $d(u,v) \equiv 0 \bmod
4$, and $c(N[v])=\{2,3\}$ if $d(u,v) \equiv 2 \bmod 4$. If $v$ is a
vertex at odd distance from $u$, then $v$ is not a leaf and
$c(N[v])=\{1,2,3\}$. As a consequence, $c$ is a 3-lid-coloring of $T$.
\end{proof}

Another class of bipartite graphs that behaves nicely with regards to
locally identifying coloring is the class of graphs obtained by taking
the Cartesian product of two bipartite graphs. For two graphs
$G_1=(V_1,E_1)$ and $G_2=(V_2,E_2)$, the Cartesian product of $G_1$
and $G_2$, denoted by $G_1 \square G_2$, is the graph with vertex set
$V_1 \times V_2$, in which two vertices $(u_1,u_2)$ and $(v_1,v_2)$
are adjacent whenever $u_2=v_2$ and $u_1v_1 \in E_1$, or $u_1=v_1$ and
$u_2v_2 \in E_2$. 

\begin{theorem}\label{prod}
If $G_1$ and $G_2$ are bipartite graphs without isolated vertices,
then $G_1\square G_2$ is 3-lid-colorable.
\end{theorem}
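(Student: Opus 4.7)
The plan is to produce an explicit 3-lid-coloring guided by Lemma~\ref{lem:bip2}: in any 3-lid-coloring of a connected bipartite graph on at least three vertices, one side of the bipartition is monochromatic and the other side uses the remaining two colors. Since $G_1\square G_2$ is bipartite with parts $X=(A_1\times A_2)\cup(B_1\times B_2)$ and $Y=(A_1\times B_2)\cup(B_1\times A_2)$, where $(A_i,B_i)$ is the bipartition of $G_i$, I would color every vertex of $X$ with color~$1$, every vertex of $A_1\times B_2$ with color~$2$, and every vertex of $B_1\times A_2$ with color~$3$.

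The first step is to check that $G_1\square G_2$ has no ``twin edges'', i.e.\ that $N[u]\neq N[v]$ for every edge $uv$. This follows quickly from the hypothesis that neither $G_1$ nor $G_2$ has an isolated vertex: if $u=(u_1,u_2)$ and $v=(v_1,v_2)$ are adjacent with, say, $u_2=v_2$ and $u_1v_1\in E_1$, then picking any neighbor $w_2$ of $u_2$ in $G_2$ produces a vertex $(u_1,w_2)\in N(u)\setminus N[v]$. Consequently, we must verify the full lid condition on every edge.

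Next I would verify the coloring itself. Properness is immediate since each color class is contained in one side of the bipartition. For the lid condition, every edge of $G_1\square G_2$ connects some $u\in X$ to some $v\in Y$, and the neighbors of $v$ all lie in $X$, so $c(N[v])=\{1,c(v)\}\in\{\{1,2\},\{1,3\}\}$. The key computation is then to show that $c(N[u])=\{1,2,3\}$ for every $u\in X$. If $u=(a_1,a_2)\in A_1\times A_2$, then $a_1$ has a neighbor $b_1\in B_1$ and $a_2$ has a neighbor $b_2\in B_2$ (no isolated vertices), giving neighbors $(b_1,a_2)\in B_1\times A_2$ colored~$3$ and $(a_1,b_2)\in A_1\times B_2$ colored~$2$. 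The case $u\in B_1\times B_2$ is symmetric. Hence $c(N[u])=\{1,2,3\}\neq c(N[v])$ for every edge $uv$, concluding the proof.

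There is essentially no obstacle: the only place where the hypothesis of no isolated vertex is used is to guarantee (a)~that adjacent vertices never have equal closed neighborhoods, and (b)~that every vertex of $X$ sees both colors $2$ and $3$ among its neighbors. Both conclusions are immediate from the product structure, which is why the construction is so direct once Lemma~\ref{lem:bip2} has suggested the right shape of a candidate coloring.
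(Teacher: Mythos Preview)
Your proposal is correct and follows essentially the same approach as the paper: both define the identical explicit 3-coloring (color~1 on $(A_1\times A_2)\cup(B_1\times B_2)$, colors~2 and~3 on the two halves of the other part) and verify that vertices in $X$ see all three colors while vertices in $Y$ see exactly two. Your separate check that adjacent vertices never have equal closed neighborhoods is harmless but unnecessary, since the argument already gives $c(N[u])\neq c(N[v])$ for \emph{every} edge $uv$; the paper simply omits this step.
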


\begin{proof}
Let $\{U_1,V_1\}$ and $\{U_2,V_2\}$ be the partite sets of $G_1$ and
$G_2$, respectively. Then $G_1\square G_2$ is a bipartite graph with
partition $\{(U_1\times U_2) \cup (V_1\times V_2),(U_1\times V_2) \cup
(V_1\times U_2)\}$ and because there are no isolated vertices in $G_1$
and $G_2$, each vertex of $(U_1\times U_2) \cup (V_1\times V_2)$ has a
neighbor in $U_1\times V_2$ and a neighbor in $V_1\times U_2$.

We define $c$ by $c(u)=1$ if $u\in (U_1\times U_2) \cup (V_1\times
V_2)$, $c(u)=2$ if $u \in U_1\times V_2$, and $c(u)=3$ if $u \in
V_1\times U_2$.  Then $c$ is a lid-coloring of $G_1\square G_2$:
$c(N[u])=\{1,2,3\}$ for vertices of $(U_1\times U_2) \cup (V_1\times
V_2)$, $c(N[u])=\{1,2\}$ for vertices of $U_1\times V_2$ and
$c(N[u])=\{1,3\}$ for vertices of $V_1\times U_2$.

By Observation~\ref{obs:2lid}, $G_1\square G_2$ does not have a
$2$-lid-coloring.
\end{proof}

As a corollary, we obtain that hypercubes and grids in any dimension
are 3-lid-colorable. We now focus on bipartite graphs that are not
3-lid-colorable.

\begin{theorem}\label{th-bip}
If $G$ is a bipartite graph, then $\chi_{\text{lid}} (G)\leq 4$.
\end{theorem}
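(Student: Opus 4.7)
The plan is to reduce to connected components and treat three regimes. A component on at most $2$ vertices trivially has $\chi_{\text{lid}}\le 2$. A star $K_{1,k}$ with $k\geq 2$ admits an easy 3-lid-coloring where the center gets color $1$, one leaf gets color $2$, and the remaining leaves get color $3$. For the remaining case, $G$ is a connected bipartite graph with $|V(G)|\geq 3$ that is not a star.

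The first preparatory step is to observe that, under these hypotheses, every vertex $r$ of $G$ has at least one neighbor of degree at least $2$: if $r$ is not a leaf and all of its neighbors were pendants, then $G = \{r\}\cup N(r)$ would be a star; and if $r$ is itself a leaf, its unique neighbor must have another neighbor because $|V(G)|\geq 3$. I would then pick any such $r$, run BFS, and let $L_i = \{x : d(r,x) = i\}$. The coloring assigns to each $x$ the color $1$, $3$, $2$, or $4$ according to whether $d(r,x)\bmod 4$ is $0$, $1$, $2$, or $3$, with a single exception: any pendant in $L_1$ is recolored with $4$ instead of $3$. Properness is immediate because every edge of $G$ joins consecutive BFS layers and consecutive layer-colors differ (the pendant exception only introduces color $4$ on $L_1$, which is not shared with $L_0$ or $L_2$).

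For the locally identifying property I would verify each edge $uv$ with $u\in L_i$ and $v\in L_{i+1}$. When $i\geq 1$, the vertex $u$ has a BFS parent in $L_{i-1}$ whose color is distinct from $c(v)$ within the opposite bipartition's color pair, and any BFS parent in $L_1$ of a vertex in $L_2$ is automatically non-pendant. This forces $c(N[u])$ to have a prescribed $3$-element form, and a case split on whether $v$ has a neighbor in $L_{i+2}$ shows that $c(N[v])$ is either a $2$-element set or a different $3$-element set, so $c(N[u])\neq c(N[v])$ in every subcase.

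The hard part, and the reason for the pendant exception, is the case $i=0$. Here $u=r$ has no BFS parent, so the previous argument breaks down. But $r$ has at least one non-pendant neighbor by our choice, so $3\in c(N[r])\subseteq\{1,3,4\}$. A pendant $v\in L_1$ satisfies $c(N[v])=\{1,4\}$, which differs from $c(N[r])$ by missing $3$; a non-pendant $v\in L_1$ has an $L_2$-neighbor colored $2$, giving $c(N[v])=\{1,2,3\}$, which differs from $c(N[r])$ either by containing $2$ or by missing $4$. This closes the last case.
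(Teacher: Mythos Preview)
Your proof is correct and uses essentially the same BFS-distance-modulo-$4$ coloring as the paper. The paper avoids your pendant-recoloring patch by choosing the root more carefully: in a connected graph on at least five vertices there is always a vertex with no degree-one neighbour, and rooting the BFS there makes the base case $i=0$ go through without any exception.
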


\begin{proof}
We can assume that $G$ is a connected graph with at least five
vertices. Then there exists a vertex $u$ of $G$ that is not adjacent
to a vertex of degree one. For any vertex $v$ of $G$, set $c(v)$ to be
the element of $\{0,1,2,3\}$ congruent with $d(u,v)$ modulo 4. We
claim that $c$ is a lid-coloring of $G$.  Since $G$ is bipartite, $c$
is clearly a proper coloring. Let $v,w$ be two adjacent vertices in
$G$. We may assume that they are at distance $k\ge 0$ and $k+1$ from
$u$, respectively. If $k=0$, then $v=u$ and $w$ has a neighbor at
distance two from $u$, so $c(N[v])=\{0,1\}$ and
$c(N[w])=\{0,1,2\}$. If $k\ge 1$, then $(k-1) \bmod 4$ is in $c(N[v])$
but not in $c(N[w])$, so $c(N[v])\ne c(N[w])$.
\end{proof}

We now prove that deciding whether a bipartite graph is $3$ or
$4$-lid-colorable is a hard problem.

\begin{theorem}\label{bip-np}
For any fixed integer $g$, deciding whether a bipartite graph with
girth at least $g$ and maximum degree 3 is 3-lid-colorable is an
NP-complete problem.
\end{theorem}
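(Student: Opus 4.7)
Membership in NP is immediate: a candidate 3-coloring can be verified in polynomial time by comparing, for each edge, the closed-neighborhood color sets of its endpoints.

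For hardness, the plan is a polynomial reduction from NAE-3SAT with bounded variable occurrences (equivalently, Property~B for 3-uniform hypergraphs of bounded maximum degree). Lemmas~\ref{lem:bip1} and~\ref{lem:bip2} give the crucial structural handle: in any 3-lid-coloring of a connected bipartite graph on at least three vertices with bipartition $\{U,V\}$, one of the parts (say $U$) is monochromatically colored~$1$ and the other uses only the two remaining colors, and the lid condition on an edge incident to $u\in U$ reduces to ``$u$ has neighbors of both colors~$2$ and~$3$''. Consequently, $G$ is 3-lid-colorable if and only if one of the hypergraphs $(V,\{N(u):u\in U\})$ or $(U,\{N(v):v\in V\})$ has Property~B.

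Starting from an NAE-3SAT instance $\varphi$, I would first form the variable-clause incidence bipartite graph $G_0$; this is subcubic, and by the characterization above its 3-lid-colorability is in correspondence with the NAE-assignments of $\varphi$, settling the case $g=4$. To raise the girth above a prescribed $g$, I would replace every edge of $G_0$ by a path of some fixed odd length $L\equiv 1\pmod 4$ with $L\ge g/4$, whose internal vertices all have degree~$2$ and alternate between the two parts. A short induction using the ``sees both colors'' rule along such a path shows that its two endpoints must receive the same color, so the subdivision preserves both the choice of monochromatic side and the NAE correspondence, while turning every original cycle of length $\ell$ into a cycle of length $L\ell\ge g$. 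To rule out the symmetric option of choosing $V$ as the monochromatic side, I would finally attach a small pendant gadget placing a leaf on the $V$ side, located at a vertex with spare capacity so that no vertex exceeds degree~$3$ and no new short cycle is created; any leaf on a monochromatic side immediately violates the lid condition (its unique neighbor has the same closed-neighborhood color set), so this locks $U$ as the monochromatic part.

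The main obstacle is verifying that the combined construction (edge-subdivision plus pendant gadget) preserves the exact equivalence with NAE-satisfiability while simultaneously meeting all three structural constraints (subcubic, bipartite, girth $\ge g$); in particular, the pendant must be insertable without disturbing the forced equalities propagated along the subdivided edges. A secondary, more routine technicality is confirming the NP-completeness of the bounded-occurrence version of NAE-3SAT used as the starting point, which follows by standard sparsification arguments, possibly via an internal application of the same subdivision trick to reduce variable occurrences before the main reduction is carried out.
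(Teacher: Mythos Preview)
Your plan is essentially the paper's: both reduce from hypergraph 2-colorability via the characterisation of Lemmas~\ref{lem:bip1}--\ref{lem:bip2}, use long paths with indices read modulo~4 to enforce the girth bound while propagating colours, and rely on leaves to pin down which side of the bipartition is monochromatic. The paper packages your three separate ingredients (bounded-occurrence source problem, edge subdivision, pendant gadget) into one gadget: for each hypergraph vertex $v$ it lays down a path $P_v=v_0v_1\cdots v_{4t}$ and attaches each hyperedge vertex $w_e$ with $e\ni v$ to a fresh node of $P_v$ at an index $\equiv 2\pmod 4$. The path endpoints $v_0,v_{4t}$ are the leaves forcing the monochromatic side, the spacing along $P_v$ gives the girth, and the length of $P_v$ absorbs arbitrarily many occurrences of $v$, so the reduction starts directly from general 3-uniform hypergraph 2-colouring and the ``bounded-occurrence NAE-3SAT'' detour you flag as a secondary technicality is not needed.

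One statement in your sketch is wrong as written: after replacing an edge by a path of odd length $L\equiv 1\pmod 4$, the two endpoints lie on opposite sides of the bipartition and therefore \emph{cannot} receive the same colour. What your alternation argument actually yields (and what the reduction needs) is that the variable endpoint $v=x_L\in V$ and the $V$-vertex $x_1$ adjacent to the clause endpoint satisfy $c(x_1)=c(x_L)$; this is precisely why the paper uses indices modulo~4 along $P_v$.
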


\begin{proof}
We recall that a 2-coloring of a hypergraph
$\mathcal{H}=(\mathcal{V},\mathcal{E})$ is a partition of its vertex
set $\mathcal{V}$ into two color classes such that no edge in
$\mathcal{E}$ is monochromatic. We reduce our problem to the
NP-complete problem of deciding the 2-colorability of $3$-uniform
hypergraphs~\cite{Lov73}.

Let $\mathcal{H}=(\mathcal{V},\mathcal{E})$ be a hypergraph with at
least one hyperedge.  We construct the bipartite graph $G=(V,E)$ in
the following way. To each vertex $v\in \mathcal{V}$, we associate a
path $P_v$ with vertices $\{v_0,\dots,v_{4t}\}$ in $G$ (where $t$ will
depend on the degree of $v$ in $\mathcal{H}$ and the girth $g$ we want
for $G$). All the paths $P_v$ are built on disjoint sets.  To each
hyperedge $e\in \mathcal{E}$, we associate a vertex $w_e$ in $G$. If a
hyperedge $e$ contains a vertex $v$ in $\mathcal{H}$, then we add an
edge in $G$ between $w_e$ and a vertex $v_i$ of $P_v$ for some index
$i\equiv 2\bmod 4$.  We require that a vertex $v_i$ on a path $P_v$ is
adjacent to at most one vertex corresponding to a hyperedge containing
$v$. It follows that the graph $G$ is bipartite with maximum
degree 3.  Moreover, we can construct $G$ in polynomial time and
ensure that the girth of $G$ is at least $g$ by leaving enough space
(at least $g/2$ vertices of degree two) between any two consecutive
vertices of degree 3 on the paths $P_v$.

We shall prove that $\mathcal{H}$ is 2-colorable if and only if
$\chi_{\text{lid}}(G)=3$.

Assume first that $\mathcal{H}$ admits a 2-coloring $\mathcal{C}: V
\rightarrow \{1,2\}$. We define the following 3-coloring $c$ of $G$
such that $c(v_{i\equiv 2 \bmod 4})=\mathcal{C}(v)$, $c(v_{i \equiv 0
  \bmod 4})=3-\mathcal{C}(v)$, $c(v_{i\equiv 1 \bmod 2})=3$ if $v\in
V$, and $c(w_e)=3$ for all vertices $w_e$ with $e\in \mathcal{E}$.
Let us check that $c$ is a lid-coloring of $G$.  We have
$c(N[w_e])=\acc{1,2,3}$ since $c(w_e)=3$ and $w_e$ is adjacent to a
vertex colored 1 and to a vertex colored 2 because of the 2-coloring
of $\mathcal{H}$.  Also, $c(N[v_{i\equiv 1 \bmod 2}])=\acc{1,2,3}$,
$c(N[v_{i\equiv 2 \bmod 4}])=\acc{\mathcal{C}(v),3}$, and $c(N[v_{i
    \equiv 0 \bmod 4}])=\acc{3-\mathcal{C}(v),3}$.  So, for every edge
$uv$ in $G$, we have $c(N[u])\ne c(N[v])$.

Conversely, assume that $G$ (with bipartition $\{U,V\}$) admits a
lid-coloring $c$ using colors $1,2,3$.  By Lemma~\ref{lem:bip2}, we
can assume that $c(U)=\{1,2\}$ and $c(V)=\{3\}$, and that the vertices
of degree one in $G$ are in $U$. This implies that $c(v_{i\equiv 2
  \bmod 4})\in\acc{1,2}$, $c(v_{i\equiv 0 \bmod 4})=3-c(v_{i \equiv 2
  \bmod 4})$, and $c(v_{i \equiv 1 \bmod 2})=c(w_e)=3$.  Hence, the
vertex-coloring of $\mathcal{V}$, in which each vertex $v$ receives
the color $c(v_{i\equiv 2 \bmod 4})$, is 2-coloring of the
hypergraph~$\mathcal{H}$.
\end{proof}

It turns out that the connection between 3-lid-coloring and hypergraph
2-coloring highlighted in the proof of Theorem~\ref{bip-np} has
further consequences. For a connected bipartite graph $G$ with
bipartition $\{U,V\}$, let $\mathcal{H}_U$ be the hypergraph with vertex
set $U$ and hyperedge set $\{N(v), v\in V\}$. A direct consequence of
Lemmas~\ref{lem:bip1} and~\ref{lem:bip2} is that a connected graph $G$
distinct from a triangle is 3-lid-colorable if and only if it is
bipartite (say with bipartition $\{U,V\}$) and at least one of
$\mathcal{H}_U$ and $\mathcal{H}_V$ is 2-colorable.

A consequence of a result of Moret~\cite{Mor88} (see also~\cite{ABB08}
for further details) is that if $G$ is a subcubic bipartite planar
graph with bipartition $\{U,V\}$, then we can check in polynomial time
whether $\mathcal{H}_U$ (or $\mathcal{H}_V$) is 2-colorable. As a
counterpart of Theorem~\ref{bip-np}, this implies:

\begin{theorem}\label{th-planar-bip}
It can be checked in polynomial time whether a planar graph $G$ with
maximum degree three is 3-lid-colorable.
\end{theorem}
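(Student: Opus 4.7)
The plan is to assemble the characterization established in the paragraph preceding the theorem with the cited algorithmic result of Moret. Recall that a connected graph distinct from a triangle was shown to be 3-lid-colorable if and only if it is bipartite with some bipartition $\{U,V\}$ and at least one of the hypergraphs $\mathcal{H}_U$ or $\mathcal{H}_V$ is 2-colorable. Since $\chi_{\text{lid}}(G)$ equals the maximum of the lid-chromatic numbers of its connected components, it suffices to give a polynomial-time decision procedure on each connected component $C$ of $G$ separately; every such $C$ is planar with maximum degree at most 3.

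For each component $C$, I would proceed as follows. First, handle trivial cases: if $|V(C)|\le 2$, then $C$ is 3-lid-colorable by Observation~\ref{obs:2lid}, and if $C$ is a triangle, $C$ is also 3-lid-colorable. Otherwise, test bipartiteness of $C$ by a linear-time BFS. If $C$ is not bipartite, Lemma~\ref{lem:bip1} immediately yields $\chi_{\text{lid}}(C)\ge 4$, so $C$ is not 3-lid-colorable and we reject. If $C$ is bipartite with bipartition $\{U,V\}$, construct the two hypergraphs $\mathcal{H}_U$ and $\mathcal{H}_V$ (whose hyperedges have size at most 3, since $\Delta(G)\le 3$) and run the polynomial-time algorithm of Moret~\cite{Mor88} on each. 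By the characterization, we accept $C$ if and only if at least one of $\mathcal{H}_U, \mathcal{H}_V$ is 2-colorable; the graph $G$ is accepted iff every component is accepted.

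Correctness is immediate from the characterization, and each step is polynomial: bipartiteness testing and the construction of $\mathcal{H}_U,\mathcal{H}_V$ are linear in $|V(C)|$, and Moret's algorithm is polynomial on the relevant class. The only subtle point — and really the sole nontrivial content of the theorem — is verifying that the hypergraphs $\mathcal{H}_U$ and $\mathcal{H}_V$ arising from a subcubic bipartite planar component actually fall within the class to which Moret's algorithm applies. This is precisely what is alluded to in the sentence preceding the theorem statement, so the main obstacle reduces to a careful citation of~\cite{Mor88} and~\cite{ABB08} rather than to any new combinatorial argument.
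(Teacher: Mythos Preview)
Your proposal is correct and follows exactly the same approach as the paper: reduce to connected components, invoke the characterization (from Lemmas~\ref{lem:bip1} and~\ref{lem:bip2}) that a connected graph other than a triangle is 3-lid-colorable iff it is bipartite and one of $\mathcal{H}_U,\mathcal{H}_V$ is 2-colorable, and then apply Moret's polynomial-time result for the planar subcubic case. The paper's own argument is no more detailed than what you have written; indeed, you are slightly more explicit in spelling out the component-by-component procedure and the trivial base cases.
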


It was proved by Burstein~\cite{Bur74} and Penaud~\cite{Pen75} that
every planar hypergraph in which all hyperedges have size at least
three is 2-colorable, and Thomassen~\cite{Tho92} proved that for any
$k \ge 4$ any $k$-regular $k$-uniform hypergraph is 2-colorable. As a
consequence, we obtain the following two results:

\begin{theorem}\label{th-planar-deg}
Let $G$ be a bipartite planar graph with bipartition $\{U,V\}$ such that
all vertices in $U$ or all vertices in $V$ have degree at least three.
Then $G$ is 3-lid-colorable.
\end{theorem}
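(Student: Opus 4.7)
The plan is to derive the theorem directly from the characterisation recorded just before Theorem~\ref{th-planar-bip}: a connected bipartite graph $G$ distinct from a triangle, with bipartition $\{U,V\}$, is 3-lid-colorable if and only if at least one of $\mathcal{H}_U$, $\mathcal{H}_V$ is 2-colorable. Since $\chi_{\text{lid}}$ of $G$ is the maximum of $\chi_{\text{lid}}$ over its connected components, and since connected components on at most two vertices are 3-lid-colorable by Observation~\ref{obs:2lid}, I may assume that $G$ is connected and has at least three vertices. Moreover, being bipartite, $G$ is certainly not a triangle.

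Without loss of generality, assume that all vertices of $V$ have degree at least three in $G$. Then every hyperedge $N(v)$ of $\mathcal{H}_U$ (for $v\in V$) has size at least three. Furthermore, the bipartite incidence graph of $\mathcal{H}_U$—with vertex side $U$, hyperedge side $V$, and an edge joining $u\in U$ and $v\in V$ precisely when $u\in N(v)$—coincides with $G$ itself, since $G$ is bipartite with bipartition $\{U,V\}$. As $G$ is planar, this realises $\mathcal{H}_U$ as a planar hypergraph in the standard (incidence-graph) sense.

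The Burstein--Penaud theorem on 2-colorability of planar hypergraphs with all hyperedges of size at least three then yields a 2-coloring of $\mathcal{H}_U$. Plugging this into the characterisation above, $G$ is 3-lid-colorable, which is exactly the desired conclusion.

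The argument is essentially a packaging of already-established ingredients, so there is no genuine obstacle. The hypothesis on minimum degree on one side of the bipartition has been tailored precisely so that $\mathcal{H}_U$ (or symmetrically $\mathcal{H}_V$) meets the size-at-least-three assumption of Burstein--Penaud. The only delicate point is to verify that the notion of planar hypergraph used in~\cite{Bur74,Pen75} is implied by planarity of the incidence graph, which is the standard convention and makes the reduction immediate.
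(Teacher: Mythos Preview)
Your proof is correct and follows exactly the route the paper indicates: the paper states Theorem~\ref{th-planar-deg} as an immediate consequence of the Burstein--Penaud theorem combined with the characterisation (recorded just before Theorem~\ref{th-planar-bip}) of 3-lid-colorable bipartite graphs via 2-colorability of $\mathcal{H}_U$ or $\mathcal{H}_V$, and you have simply spelled out that deduction in full, including the reduction to connected components and the identification of $G$ with the incidence graph of $\mathcal{H}_U$.
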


\begin{theorem}\label{th-reg}
For $k\ge 4$, a $k$-regular graph is 3-lid-colorable if and only if it
is bipartite.
\end{theorem}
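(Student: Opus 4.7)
My plan is to deduce both directions of the equivalence directly from the characterization of 3-lid-colorable connected graphs highlighted just before Theorem~\ref{th-planar-bip} (namely: a connected graph distinct from a triangle is 3-lid-colorable if and only if it is bipartite and at least one of the hypergraphs $\mathcal{H}_U$, $\mathcal{H}_V$ is 2-colorable), combined with Thomassen's theorem that every $k$-regular $k$-uniform hypergraph with $k \ge 4$ is 2-colorable. Since $\chi_{\text{lid}}(G)$ equals the maximum of the lid-chromatic numbers of its connected components, it suffices to treat the case where $G$ is connected. Moreover, every connected component of a $k$-regular graph with $k \ge 4$ has at least $k+1 \ge 5$ vertices, hence is not a triangle.

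For the forward direction, suppose that the connected $k$-regular graph $G$ admits a 3-lid-coloring. Because $G$ has at least five vertices and is not a triangle, Lemma~\ref{lem:bip1} forces $G$ to be bipartite.

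For the converse, suppose $G$ is a connected $k$-regular bipartite graph with bipartition $\{U,V\}$ and $k\ge 4$. Consider the hypergraph $\mathcal{H}_U$ on vertex set $U$ with hyperedges $\{N(v) : v \in V\}$. Since $G$ is $k$-regular, each neighborhood $N(v)$ has size exactly $k$, and each vertex $u \in U$ lies in exactly $k$ such neighborhoods, one for each of its neighbors in $V$. Thus $\mathcal{H}_U$ is a $k$-uniform $k$-regular hypergraph, and Thomassen's theorem provides a 2-coloring of $\mathcal{H}_U$. The characterization recalled above then yields a 3-lid-coloring of $G$.

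There is no genuine obstacle: the argument is a direct assembly of Lemma~\ref{lem:bip1}, the hypergraph reformulation given after Theorem~\ref{bip-np}, and Thomassen's theorem. The only point requiring a brief verification is that $k$-regularity of $G$ translates exactly into $k$-regularity and $k$-uniformity of $\mathcal{H}_U$ (and of $\mathcal{H}_V$), which is what forces the hypothesis $k \ge 4$ — the threshold coming from Thomassen's result and not from the lid-coloring argument itself.
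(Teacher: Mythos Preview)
Your proposal is correct and follows exactly the approach the paper indicates: the theorem is stated there as a direct consequence of the hypergraph characterization (derived from Lemmas~\ref{lem:bip1} and~\ref{lem:bip2}) together with Thomassen's result on $k$-regular $k$-uniform hypergraphs, and you have simply written out those details. The only minor point you (and the paper) leave implicit is that repeated neighborhoods $N(v)=N(v')$ should be counted with multiplicity to make $\mathcal{H}_U$ genuinely $k$-regular, but this does not affect 2-colorability.
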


Since bipartite graphs have bounded lid-chromatic number, a natural
question is whether $\chi_{\text{lid}}$ is upper-bounded by a function of the
(usual) chromatic number. However, this is not true, since the graph
$G$ obtained from a clique on $n$ vertices by subdividing each edge
exactly twice has $\chi_{\text{lid}}(G)=n$ (it suffices to observe that two
vertices of the initial clique cannot have the same color in the
subdivided graph), whereas it is 3-colorable. This example also shows
that if the edges of a graph $G$ are partitioned into two sets $E_1$
and $E_2$, and the subgraphs of $G$ induced by $E_1$ and $E_2$ have
bounded lid-chromatic number, then $\chi_{\text{lid}}(G)$ is not necessarily
bounded.

We propose the following conjecture relating $\chi_{\text{lid}}$ and $\chi$
for highly structured graphs. A graph is \emph{chordal} if it does not
contain an induced cycle of length at least four.

\begin{conjecture}\label{conj:chordal}
  For any chordal graph $G$, $\chi_{\text{lid}}(G)\leq 2\chi(G)$.
\end{conjecture}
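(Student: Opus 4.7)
The plan is to use the clique-tree representation of chordal graphs together with a doubling of the palette by a depth-parity. Let $T$ be a clique tree of $G$ (the nodes are the maximal cliques of $G$, and for each vertex $v$ the set $T_v$ of maximal cliques containing $v$ forms a subtree of $T$). Root $T$ at an arbitrary node and, for every vertex $v$, let $C(v)$ be the node of $T_v$ closest to the root (the ``apex'' of $v$), and set $p(v) := \mathrm{depth}_T(C(v)) \bmod 2$. Since $\chi(G) = \omega(G)$ for chordal graphs, fix a proper coloring $\phi : V(G) \to \{1, \ldots, \chi(G)\}$, for instance the one obtained greedily along a perfect elimination ordering. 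The proposed lid-coloring is then $c(v) := (\phi(v), p(v))$, which is proper and uses at most $2\chi(G)$ colors.

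It remains to verify the lid property. Consider an edge $uv$ with $N[u] \neq N[v]$, so that we may assume the existence of some $w \in N(u) \setminus N[v]$. The subtrees $T_u, T_v, T_w$ then satisfy $T_u \cap T_v \neq \emptyset$, $T_u \cap T_w \neq \emptyset$, and $T_v \cap T_w = \emptyset$, which strongly constrains the relative positions of $C(u), C(v), C(w)$ in $T$. A case analysis relying on the standard observation that if two subtrees of a rooted tree intersect then the apex of one lies inside the other, should exhibit the color $(\phi(w), p(w))$ as belonging to $c(N[u])$ but not to $c(N[v])$: any $x \in N[v]$ carrying that color would force $T_x$ to meet $T_w$ at a clique whose depth parity matches, and combined with $T_v \cap T_w = \emptyset$ this should yield a contradiction with the subtree-disjointness encoded in the clique tree.

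I expect this verification to be the main obstacle, and in fact to be the source of the difficulty keeping this statement at the level of a conjecture. The depth-parity trick is tailored to handle the clean case where $w$ lies deep below the apex of $v$, but configurations with $C(u) = C(v)$ in which the distinguishing neighbor $w$ sits on a sibling branch of the clique tree are more delicate: some vertex of $N[v]$ could a priori share the color $(\phi(w), p(w))$ with $w$ and destroy the argument. If the direct analysis fails in such a case, I would try two refinements. The first is to choose $\phi$ more carefully, so that cliques lying in distinct branches below a common ancestor are forced to use palettes that diverge in a controlled way, generalising the inductive constructions used for $k$-trees in Section~\ref{sec:ktree} and for interval graphs in Section~\ref{sec:inter}. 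The second is to replace the binary invariant $p$ by a richer one combining tree depth with a local BFS-layer parity in $G$, and to absorb the resulting blow-up into the $\chi(G)$ factor by a product argument. Should both approaches fail, the natural fallback is an induction on the number of maximal cliques of $G$: peel off a leaf clique of $T$, apply the inductive hypothesis, and use the extra palette freedom afforded by doubling $\chi$ to re-color the newly-introduced simplicial vertices without creating lid-conflicts.
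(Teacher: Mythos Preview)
This statement is stated in the paper as a \emph{conjecture}; no proof is given, and the surrounding sections only verify it for $k$-trees, interval graphs, and split graphs. So there is no paper proof to compare against; the relevant question is whether your plan would actually settle the conjecture.

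It does not, and the obstruction appears already on the simplest chordal graphs. Take $G=P_6$, the path $v_1v_2\cdots v_6$, so $\chi(G)=2$ and the clique tree $T$ is the path $C_1\text{--}C_2\text{--}\cdots\text{--}C_5$ with $C_i=\{v_i,v_{i+1}\}$. If you root $T$ at $C_1$, the apex depths of $v_1,\ldots,v_6$ are $0,0,1,2,3,4$, giving parities $0,0,1,0,1,0$; since $\phi$ alternates $1,2,1,2,1,2$, you get $c(v_2)=c(v_4)=c(v_6)$ and $c(v_3)=c(v_5)$, whence $c(N[v_3])=\{c(v_2),c(v_3),c(v_4)\}=\{c(v_3),c(v_4),c(v_5)\}=c(N[v_4])$, violating the lid condition. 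A short check shows that \emph{every} choice of root for $T$ produces a similar collision somewhere along $P_6$. Your first proposed fix cannot help here: up to swapping the two classes there is a unique proper $2$-coloring of a path, so there is no freedom in $\phi$ to exploit.

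Your second fix, combining clique-tree depth with a BFS-layer parity in $G$, is in effect what Theorem~\ref{th-bip} does for bipartite graphs (distance modulo $4$), and making such a scheme interact correctly with a clique-tree decomposition for arbitrary chordal graphs is exactly the open content of the conjecture rather than a routine refinement. The fallback induction, peeling a leaf clique of $T$, is close in spirit to the paper's proof of Theorem~\ref{th-ktree}, but that argument relies crucially on every step introducing a single simplicial vertex whose neighborhood is a clique of fixed size $k$; in a general chordal graph the leaf clique of $T$ may introduce several simplicial vertices at once and $\omega$ may vary along $T$, so the rule $c(v_i)=c(v_j)+k+1$ has no direct analogue. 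In short, the scheme $(\phi,p)$ is too coarse, and the suggested repairs are restatements of the difficulty rather than resolutions of it.
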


The next three sections are dedicated to important subclasses of
chordal graphs for which we are able to verify
Conjecture~\ref{conj:chordal}.

\section{$k$-trees \label{sec:ktree}}

This section is devoted to the study of $k$-trees.  A {\em $k$-tree}
is a graph whose vertices can be ordered $v_1,v_2,\ldots,v_n$ in
such a way that the vertices $v_1$ up to $v_{k+1}$ induce a $(k+1)$-clique
and for each $k+2 \le i \le n$, the neighbors of $v_i$ in
$\{v_j\,|\,j<i\}$ induce a $k$-clique. By definition, for every $k+1 \leq
i \le n$ the graph $G_i$ induced by $\{v_j\,|\,j\le i\}$ is a $k$-tree and
every $k$-clique in a $k$-tree is contained in a $(k+1)$-clique.

\begin{theorem}\label{th-ktree}
If $G$ is a $k$-tree, then $\chi_{\text{lid}}(G)\leq 2k+2$.
\end{theorem}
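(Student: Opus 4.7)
The plan is to construct an explicit locally identifying coloring of the $k$-tree $G$ using $2k+2$ colors by induction along the construction order $v_1,\ldots,v_n$. I view the palette as $k+1$ pairs $\{c,c'\}$ indexed by ``base colors'' $c\in\{1,\ldots,k+1\}$, and throughout I maintain a proper $(k+1)$-base-coloring. The initial $(k+1)$-clique $\{v_1,\ldots,v_{k+1}\}$ is colored with the $k+1$ distinct base colors (say, all unprimed); no lid constraint is active on it because these vertices share a common closed neighborhood.

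The key structural input is that every vertex of a $k$-tree lies in a $(k+1)$-clique (namely the initial one, or $P_i\cup\{v_i\}$ for $i\ge k+2$), so in any such base-coloring the color set $c(N[v])$ already hits all $k+1$ base colors. Consequently, $c(N[u])\ne c(N[v])$ reduces to disagreement of the \emph{shade profiles} $S(v,c)=\{\text{shades of vertices of }N[v]\text{ with base color }c\}$ on some $c$. When a new vertex $v_i$ is appended with parent $k$-clique $P_i$, propriety leaves it exactly $k+2$ admissible colors: both shades of the unique missing base color $b_i$, plus, for each of the $k$ base colors already used in $P_i$, the shade opposite to the one appearing there. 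I would use this freedom to pick $c(v_i)$ so that $S(v_i,\cdot)\ne S(w,\cdot)$ for every $w\in P_i$ with $N[v_i]\ne N[w]$.

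The main obstacle is that appending $v_i$ also enlarges $N[w]$ for every $w\in P_i$, which can retroactively spoil a previously ensured lid identification for an edge $wy$ with $y\notin P_i\cup\{v_i\}$: the new color $c(v_i)$ may enter $c(N[w])$ and cause it to coincide with $c(N[y])$. To control this I would strengthen the induction hypothesis and track, for each edge of $G_{i-1}$ currently distinguishing its endpoints, an explicit \emph{witness} base color $c^{\ast}(u,v)$ on which the shade profiles disagree. The $k+2$ choices for $v_i$ then only need to avoid a bounded set of ``bad'' choices---those that would invalidate a witness incident to $P_i$---and a pigeonhole-type count should ensure at least one admissible choice remains. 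The delicate point, and the main source of work, is to set up these witnesses so that they are robust: they must survive not only the current insertion of $v_i$ but all subsequent additions as well, and this is where the gap between $k+2$ available choices and the handful of constraints must be leveraged carefully.
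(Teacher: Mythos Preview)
Your framework---viewing the $2k+2$ colors as $k+1$ base colors, each in two shades---is exactly the right one, and it matches the paper's setup (colors modulo $2k+2$, with $c$ and $c+k+1$ paired). The difficulty you identify, retroactive spoiling, is also real. However, the proposal has a genuine gap: the pigeonhole step is not carried out, and it is not clear it can be. When you insert $v_i$, the vertices $w\in P_i$ may have arbitrarily many neighbours $y\notin P_i\cup\{v_i\}$ (degrees in a $k$-tree are unbounded), so the number of witnesses you are trying to protect is not bounded in terms of $k$. You only have $k+2$ choices for $c(v_i)$, and nothing in the sketch explains why the bad choices collapse to at most $k+1$ distinct colors.

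The paper avoids this bookkeeping entirely by strengthening the induction hypothesis with one additional global invariant: \emph{no two adjacent vertices ever receive opposite shades of the same base color} (equivalently, no vertex colored $x$ has a neighbour colored $x+k+1$). Under this invariant the ``retroactive spoiling'' problem disappears. Indeed, for any edge $v_iv_j$ one finds a vertex $v_\ell\in N[v_i]$ with $c(v_j)=c(v_\ell)+k+1$; the invariant then guarantees $c(v_\ell)\notin c(N[v_j])$ permanently, regardless of what vertices are appended later. To maintain the invariant, the paper does \emph{not} pick among your $k+2$ admissible colors: $k$ of those choices (the opposite shades of colors present in $P_i$) would immediately violate it. Instead, one takes any $(k+1)$-clique $C'\supseteq P_i$ in $G_{i-1}$, lets $\{v_\ell\}=C'\setminus P_i$, and sets $c(v_i)=c(v_\ell)+k+1$. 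Propriety and the invariant on $G_{i-1}$ (applied to the edges from $v_\ell$ to $P_i$) show that this choice keeps both invariants intact. Thus a single extra invariant replaces your entire witness-tracking scheme.
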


\begin{proof}
  In this proof the colors are the integers modulo $2k+2$. In
  particular, this implies that the function on integers $x \mapsto
  x+k+1$ is an involution.

Let $v_1,\ldots,v_n$ be the $n$ vertices of $G$ ordered as above.

We construct the following coloring $c$ of $G$
iteratively for $1 \le i \le n$. If $i \le k+1$, then we set
$c(v_i)=i$. Suppose $i \ge k+2$. Let $C$ be the neighborhood of $v_i$
in $G_{i}$. Since $G_{i-1}$ is a $k$-tree, the clique $C$ is contained
in a $(k+1)$-clique $C'$ of $G_{i-1}$. Let $\{v_j\}=C' \setminus
C$. We set $c(v_i)=c(v_j)+k+1$ (we may have several choices for $C'$
and thus for $j$).

We now prove that $c$ is a lid-coloring of $G$. Throughout the
procedure, the following two properties remain trivially satisfied:
(i) $c$ is a proper coloring of $G$, and (ii) no vertex colored $i$
has a neighbor colored $i+k+1$. Consider an edge $v_iv_j$ of $G$ with
$N[v_i] \ne N[v_j]$. We may assume without loss of generality that
some neighbors of $v_i$ are not adjacent to $v_j$. If $i,j \le k+1$,
then consider the minimum index $\ell$ such that $v_\ell$ is a
neighbor of $v_i$ not adjacent to $v_j$. By definition of $c$
and minimality of $\ell$, we have
$c(v_j)=c(v_\ell) +k+1$. Otherwise we can assume that $j>i$ and
$j>k+1$. Let $C$ be the neighborhood of $v_j$ in $G_j$. By definition
of $c$, there exists a $(k+1)$-clique $C'$ of $G_{j-1}$ containing $C$
such that $c(v_j)=c(v_\ell) +k+1$, where $C' \setminus
C=\{v_\ell\}$. In both cases, $c(v_\ell) \in c(N[v_i])$ while
$c(v_\ell) \not\in c(N[v_j])$ by Property~(ii). Hence, $c$ is a
lid-coloring of $G$.
\end{proof}

 \begin{figure}[htbp]
    \begin{center}
      \subfigure[\label{fig:inter}]{\includegraphics[scale=0.8]{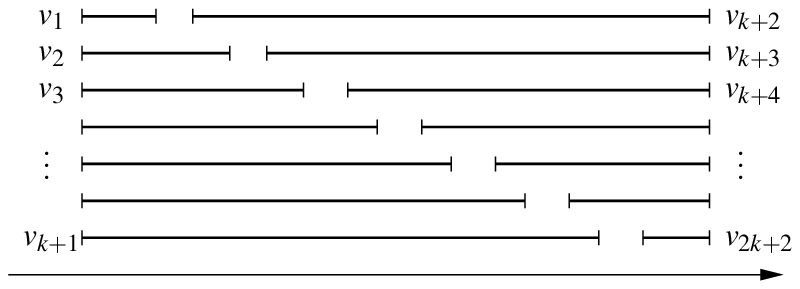}}\hspace{1cm}
      \subfigure[\label{fig:perm}]{\includegraphics[scale=0.8]{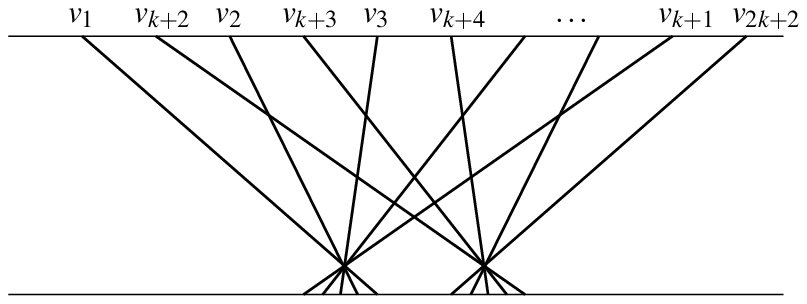}}
      \caption{The graph $P_{2k+2}^k$ as an interval graph (a) and
        as a permutation graph (b).}
    \end{center}
  \end{figure}

For fixed $t$, the fact that a graph admits a lid-coloring with at
most $t$ colors can be easily expressed in monadic second-order
logic. Thus Theorem~\ref{th-ktree} together with~\cite{CMR00} imply
that for fixed $k$, the lid-chromatic number of a $k$-tree can be
computed in linear time. Another remark is that for trees,
Theorem~\ref{th-ktree} provides the same 4-lid-coloring as
Theorem~\ref{th-bip}.

\medskip

For any two integers $k,\ell \ge 1$, we define $P_{\ell}^{k}$ as the
graph with vertex set $v_1,\ldots,v_{\ell}$ in which $v_i$ and $v_j$
are adjacent whenever $|i-j|\le k$. The graph $P_{2k+2}^{k}$ is
clearly a $k$-tree: it can be constructed from the clique formed by
$v_1,\ldots,v_{k+1}$ by adding at each step $k+2 \le i \le 2k+2$ a
vertex $v_i$ adjacent to $v_{i-k},\ldots,v_{i-1}$. The graph
$P_{2k+2}^{k}$ is also an interval graph (see Figure~\ref{fig:inter})
and a permutation graph (see Figure~\ref{fig:perm}). We now prove that
the graph $P_{2k+2}^{k}$ also provides an example showing that
Theorem~\ref{th-ktree} is best possible.

\begin{proposition}
For any $k \ge 1$, we have $\chi_{\text{lid}}(P_{2k+2}^{k})=2k+2$.
\end{proposition}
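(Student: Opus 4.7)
The upper bound $\chi_{\text{lid}}(P_{2k+2}^k) \le 2k+2$ is immediate from Theorem~\ref{th-ktree}, since $P_{2k+2}^{k}$ is a $k$-tree. For the lower bound, the plan is to show that in \emph{any} lid-coloring $c$, the $2k+2$ vertices must receive pairwise distinct colors.

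The key structural observation is that on the left end of the graph the closed neighborhoods of two consecutive vertices are strictly nested: for $1\le i\le k$ one has
$N[v_i]=\{v_1,\ldots,v_{i+k}\}\subsetneq\{v_1,\ldots,v_{i+k+1}\}=N[v_{i+1}]$.
Applying the lid-coloring condition to the edge $v_iv_{i+1}$ (these closed neighborhoods are distinct) forces $c(v_{i+k+1})\notin c(N[v_i])$, so $c(v_{i+k+1})$ is a ``new'' color, different from all of $c(v_1),\ldots,c(v_{i+k})$. Iterating from $i=1$ up to $i=k$, and using that $\{v_1,\ldots,v_{k+1}\}$ is a clique and thus already carries $k+1$ distinct colors, I obtain that $c(v_1),c(v_2),\ldots,c(v_{2k+1})$ are pairwise distinct. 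By the obvious left--right symmetry of $P_{2k+2}^{k}$ (equivalently, by repeating the argument on the edges $v_jv_{j+1}$ for $j=k+2,\ldots,2k+1$, where now $N[v_{j+1}]\subsetneq N[v_j]$ with the missing vertex being $v_{j-k}$), the vertices $v_2,v_3,\ldots,v_{2k+2}$ also receive pairwise distinct colors.

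The main remaining obstacle is to conclude $c(v_1)\ne c(v_{2k+2})$: the two extreme vertices are not adjacent, so no edge directly separates them. The trick is to exploit the central edge $v_{k+1}v_{k+2}$, whose closed neighborhoods
\[
N[v_{k+1}]=\{v_1,\ldots,v_{2k+1}\},\qquad N[v_{k+2}]=\{v_2,\ldots,v_{2k+2}\}
\]
coincide on the block $\{v_2,\ldots,v_{2k+1}\}$ and differ only in the single vertices $v_1$ and $v_{2k+2}$. The lid-coloring condition on this edge forces $c(N[v_{k+1}])\ne c(N[v_{k+2}])$, which in view of the coincidence on the middle block is equivalent to $c(v_1)\ne c(v_{2k+2})$. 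Combining this inequality with the two nesting arguments above yields $2k+2$ pairwise distinct colors, giving $\chi_{\text{lid}}(P_{2k+2}^{k})\ge 2k+2$ and hence equality.
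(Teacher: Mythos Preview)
Your proof is correct and follows essentially the same approach as the paper: both arguments use the edges $v_iv_{i+1}$ and the fact that the symmetric difference of $N[v_i]$ and $N[v_{i+1}]$ is a single vertex (or $\{v_1,v_{2k+2}\}$ for the central edge) to force all $2k+2$ colors to be distinct. The only cosmetic difference is that for the last vertex the paper argues by cases on the putative repeated color $\alpha=c(v_{2k+2})$ and uses the edge $v_{\alpha+k}v_{\alpha+k+1}$, whereas you invoke the left--right symmetry to reduce everything to the single central edge $v_{k+1}v_{k+2}$; the underlying mechanism is identical.
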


\begin{proof}
Let $c$ be a lid-coloring of $P_{2k+2}^{k}$. Without loss of
generality we have $c(v_i)=i$ for each $1 \le i \le k+1$. Observe that
for any $1 \le i \le k$, the symmetric difference between $N[v_{i}]$
and $N[v_{i+1}]$ is precisely $\{v_{i+k+1}\}$.  In addition,
$N[v_{i}]=\{v_1,\dots,v_{i+k}\}$ and so $c(N[v_i])$ contains colors
$1$ up to $k+1$.  Therefore, $c(v_i) > k+1$ whenever $k+2\le i \le
2k+1$. And we can assume that $c(v_i)=i$ for any $1 \le i \le 2k+1$.

Let $\alpha=c(v_{2k+2})$, and assume for the sake of contradiction
that $\alpha \ne 2k+2$.  Since vertices $v_{k+2},\ldots,v_{2k+2}$
induce a clique, we have $\alpha \le k+1$. The symmetric difference
between $N[v_{\alpha+k}]$ and $N[v_{\alpha+k+1}]$ is precisely
$\{v_\alpha\}$ if $\alpha\ge 2$ and is $\{v_1,v_{2k+2}\}$ if
$\alpha=1$.  In both cases, $c(v_{2k+2})=c(v_\alpha)=\alpha$ would
imply that $c(N[v_{\alpha+k}])=c(N[v_{\alpha+k+1}])$, a contradiction.
\end{proof}

\section{Interval graphs}\label{sec:inter}

In this section, we prove that the previous example is also extremal
for the class of interval graphs.

\begin{theorem}
  For any interval graph $G$, $\chi_{\text{lid}}(G)\leq 2\omega(G)$.
\end{theorem}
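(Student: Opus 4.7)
The plan is to adapt the shift-by-$\omega$ argument of Theorem~\ref{th-ktree} to an interval representation. Fix intervals with distinct endpoints for $G$ and order the vertices $v_1,\ldots,v_n$ by increasing left endpoint $\ell_{v_i}$. Then the back-neighborhood $B_i=N(v_i)\cap\{v_1,\ldots,v_{i-1}\}$ consists of intervals all containing the point $\ell_{v_i}$, and is therefore a clique of size at most $\omega-1$, where $\omega=\omega(G)$. The colors will be the integers modulo $2\omega$, so that $x\mapsto x+\omega$ is an involution, and throughout the construction I would maintain the two invariants used in the proof of Theorem~\ref{th-ktree}: (i) $c$ is proper, and (ii) no vertex of color $x$ has a neighbor of color $x+\omega$.

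For each $v_i$, I would try to select a witness $v_\ell\in V(G_{i-1})$ that is adjacent to every vertex of $B_i$ but not to $v_i$; in interval terms this amounts to $r_{v_\ell}\in[\max_{w\in B_i}\ell_w,\,\ell_{v_i})$, so that $v_\ell$ ends just before $v_i$ starts while still meeting every interval of $B_i$. Setting $c(v_i)=c(v_\ell)+\omega\pmod{2\omega}$ then automatically avoids the colors in $c(B_i)$ (otherwise $v_\ell$ and a same-colored vertex of $B_i$ would be two equally-colored members of the clique $B_i\cup\{v_\ell\}$) and preserves invariant~(ii). The lid-property check then proceeds exactly as in Theorem~\ref{th-ktree}: for an edge $v_pv_q$ with $N[v_p]\neq N[v_q]$ and a neighbor of $v_p$ not adjacent to $v_q$, the witness $v_\ell$ associated with $v_q$ satisfies $v_\ell\in N[v_p]$ (because $v_p\in B_q\subseteq N(v_\ell)$) while $c(v_\ell)\notin c(N[v_q])$ by invariant~(ii).

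The main obstacle is the existence of such a witness $v_\ell$ for every $v_i$. A $k$-tree guarantees the analogous extension to a $(k+1)$-clique for free, but in an interval graph it can happen that every earlier interval contains $\ell_{v_i}$, in which case $B_i=V(G_{i-1})$ and no candidate outside $B_i$ exists at all. I would handle these indices as a boundary case, using the slack of having at least $\omega+1$ colors outside $c(B_i)$ to pick $c(v_i)$ consistently with~(ii); and then argue separately that any edge $v_pv_q$ whose relevant endpoint was coloured in this boundary case still admits a distinguishing witness --- typically an interval ending just before $\ell_{v_q}$, whose existence is forced once the pattern $B_i=V(G_{i-1})$ breaks. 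Pushing this case analysis through while keeping invariants~(i) and~(ii) intact is the step where the proof leaves the $k$-tree template behind and relies essentially on the linear structure of interval representations.
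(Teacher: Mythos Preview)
Your plan follows the same ``shift by $\omega$'' template as the paper, but the obstruction you isolate is narrower than the real one. You assert that a witness $v_\ell$ with $r_{v_\ell}\in[\max_{w\in B_i}\ell_w,\ \ell_{v_i})$ can fail to exist only when every earlier interval contains $\ell_{v_i}$, i.e.\ only when $B_i=V(G_{i-1})$. This is not so. Take $v_1=[0,100]$, $v_2=[1,2]$, $v_3=[5,99]$, $v_4=[10,11]$: here $B_4=\{v_1,v_3\}\ne V(G_3)$, yet $\max_{w\in B_4}\ell_w=5$ and no right endpoint lies in $[5,10)$, so $v_4$ has no witness in your sense. The problem is not that every earlier interval reaches $v_i$; it is that the earlier intervals which \emph{fail} to reach $v_i$ may end too early to meet the latest-starting member of $B_i$. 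Such witness-less indices need not lie in an initial segment and need not form a clique, so the ``boundary case'' patch you sketch does not cover them, and falling back on a $v_\ell$ that misses part of $B_i$ can break both invariants~(i) and~(ii).

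The paper closes exactly this gap, and it does so \emph{before} any colouring begins: it normalises the representation so that whenever $a_i<a_j$ and $I_i\cap I_j\ne\varnothing$, some interval has its right endpoint in $[a_i,a_j)$ --- achieved by sliding $a_j$ leftward to $a_i$ whenever no such interval exists. This preprocessing necessarily creates ties among left endpoints (so it is incompatible with your distinct-endpoints hypothesis), but it guarantees a witness for every batch of intervals sharing a given left endpoint. The paper then colours batch by batch, assigns the shifted colour $c_0+\omega$ to only \emph{one} interval of the batch (the one extending furthest right), colours the remaining ones freely subject to~(i) and~(ii), and handles adjacent pairs with equal left endpoints by a separate argument that uses a symmetric normalisation on right endpoints. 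The overall architecture is the one you outlined; the missing ingredient is the preprocessing of the interval model, without which the witness you rely on simply need not exist.
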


\begin{proof} 
Let $k=\omega(G)$. In this proof the colors are the integers modulo
$2k$.  Let $G$ be a connected interval graph on $n$ vertices.  We
identify the vertices $v_1,\ldots,v_n$ of $G$ with a family of
intervals $(I_i=[a_i,b_i])_{1 \le i \le n}$ such that $v_iv_j$ is an
edge of $G$ precisely if $I_i$ and $I_j$ intersect. We may assume that
$a_1 \le a_2 \le \ldots \le a _n$.  Without loss of generality, we can
assume that if $a_{i}<a_{j}$ and $I_i \cap I_j \ne \varnothing$, then
there exists an interval $I_{\ell}$ such that $a_{i}\leq
b_{\ell}<a_{j}$; otherwise, we can change $I_{j}$ to the interval
$[a_{i},b_{j}]$ and the intersection graph remains the same. By a
similar argument, we can also assume that if $b_{j}<b_{i}$ and $I_i
\cap I_j \ne \varnothing$, then there exists an interval $I_{\ell}$
such that $b_{j}<a_{\ell}\leq b_{i}$.

\medskip

Let $\{a_1=a_{t_1} < a_{t_2} < \ldots < a_{t_s} \} $ be the set of
left ends.  At each step $i=1,\ldots,s$, we color all the intervals
starting at $a_{t_i}$. We first color the intervals starting at
$a_{t_1}$ with distinct colors in $\{0,\ldots,k-1\}$. Assume we have
colored all the intervals starting before $a_{t_i}$. Now, we
color all the intervals $\mathcal{I}(t_i)$ starting at
$a_{t_i}$. First, we define the following subsets of intervals:
\begin{itemize} 

\item $\mathcal V(t_{i})$: intervals $I_{j}$ such that
  $a_{j}<a_{t_{i}}\leq b_{j}$,

\item $\mathcal U(t_{i})$: intervals $I_{j}$ such that
  $a_{t_{i-1}}\leq b_{j} < a_{t_i}$,

\item $\mathcal T(t_{i})$: intervals $I_{j}$ of $\mathcal U(t_{i})$
  such that there is an interval $I_{\ell}$ in $\mathcal V(t_{i})$ with
  $a_{j}=a_{\ell}$.

\end{itemize} 

Note that $\mathcal V(t_{i})$ is the set of intervals that are already
colored and intersect $\mathcal{I}(t_i)$.  The set $\mathcal U(t_{i})$
is a subset of intervals already colored that intersect all the
intervals of $\mathcal V(T_{i})$. It is not empty (take any interval
finishing before $a_{t_{i}}$ with rightmost right end).  Necessarily,
all the intervals of $\mathcal U(t_{i})$ have the same right end
because no interval starts between $a_{t_{i-1}}$ and $a_{t_{i}}$.
Finally, if $\mathcal T(t_{i})\ne \varnothing$, then let $I_{0}$ be an
interval of $\mathcal T(t_{i})$ with leftmost left end, and otherwise
let $I_{0}$ be any interval of $\mathcal U(t_{i})$. Let $c_{0}$ be the
color of $I_{0}$. Note that any interval of $\mathcal U(t_{i})$ and
$\mathcal V(t_{i})$ intersects $I_{0}$, and thus has color $c_{0}$ in
its neighborhood.  We can now color the intervals of
$\mathcal{I}(t_{i})$.  We color with color $c_{0}+k$ one of the
intervals having the rightmost right end. We color the other intervals
with colors in $\{0,\ldots,2k-1\}$ such that no vertex with color $j$
is adjacent to a vertex with color $j$ or $j+k$ (this is always
possible since intervals of $\mathcal{V}({t_i}) \cup
\mathcal{I}(t_{i})$ induce a clique of size at most $k$).  This
coloring $c$ is clearly a proper $2k$-coloring and there is no vertex
with color $j$, $0\leq j \leq k-1$, adjacent to a vertex with color
$j+k$.

\medskip We now show that $c$ is a lid-coloring of $G$. Let $I_{i}$
and $I_{j}$ be two intersecting intervals with $N[I_{i}]\neq
N[I_{j}]$.  Assume first that $a_{i}\neq a_{j}$. Without loss of
generality, $a_{i}<a_{j}$. During the process, when $I_{j}$ is
colored, an interval $I_{\ell}$ also starting at $a_{j}$ is colored
with a color $c_{0}+k$ such that $c_{0}\in c(N[I_{i}])$. Necessarily,
$I_{j}\subseteq I_{\ell}$ since $I_{\ell}$ has the rightmost right end
among all intervals starting at $a_{j}$. So $c_{0}+k\in c(N[I_{j}])$
but $c_{0}\notin c(N[I_{\ell}])$ and so $c_{0}\notin
c(N[I_{j}])$. Hence, $c(N[I_{i}])\neq c(N[I_{j}])$.  Assume now that
$a_{i}=a_{j}$. Without loss of generality, $b_{j}<b_{i}$ and so $I_{j}
\subseteq I_{i}$.  Let $a_{t_{\ell}}$ be the leftmost left end such
that $b_{j}<a_{t_{\ell}}\leq b_{i}$ (it exists because $N[I_{i}]\neq
N[I_{j}]$).  Then we have $I_{i}\in \mathcal V(t_{\ell})$ and
$I_{j}\in \mathcal T(t_{\ell})$.  By construction, one of the
intervals of $\mathcal I(t_{\ell})$, say $I$, will receive the color
$c_{0}+k$ where $c_{0}$ is the color of an interval $I_{0}\in \mathcal
T(t_{\ell})$. Necessarily, $I_{j}\subseteq I_{0}$ and $c_{0}\in
c(N[I_{j}])\subseteq c(N[I_{i}])$.  We also have $c_{0}+k\in
c(N[I_{i}])$ because $I_{i}$ is a neighbor of $I$. But $c_{0}+k\notin
c(N[I_{j}])$ since $c_{0}+k\notin c(N[I_{0}])$ and $I_{j}\subseteq
I_{0}$. Hence, $c(N[I_{i}])\neq c(N[I_{j}])$.
\end{proof}

\section{Split graphs \label{sec:split}}

A {\em split graph} is a graph $G=(K\cup S,E)$ whose vertex set can be
partitioned into a clique $K$ and an independent set $S$.  In the
following, we will always consider partitions $K\cup S$ with $K$ of
maximum size.  A split graph is a chordal graph and its clique number and
chromatic number are equal to $\vert K \vert$.  We prove that it is lid-colorable
with $2\vert K\vert -1$ colors.

We say that a set $S'\subseteq S$ {\em discriminates} a set
$K'\subseteq K$ if for any $u,v \in K'$ with $N[u] \ne N[v]$, we also
have $N[u]\cap S' \ne N[v]\cap S'$. The following theorem is due to
Bondy:

\begin{theorem}[\cite{Bon72,CCHL08}]\label{bondy}
  If $A_1,A_2,\ldots,A_n$ is a family of $n$ distincts subsets of a set
  $\mathcal{A}$ with at least $n$ elements, then there is a subset
  $\mathcal{A}'$ of $\mathcal{A}$ of size $n-1$ such that all the sets
  $A_i\cap \mathcal{A}'$ are distinct.
\end{theorem}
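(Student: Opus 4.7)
The plan is to find a subset $\mathcal{A}'\subseteq\mathcal{A}$ that already separates the $A_i$ and has size at most $n-1$, and then pad it up to exactly $n-1$ using the hypothesis $|\mathcal{A}|\ge n$. Concretely, I would pick $\mathcal{A}'\subseteq\mathcal{A}$ of \emph{minimum} cardinality subject to the traces $A_1\cap\mathcal{A}',\ldots,A_n\cap\mathcal{A}'$ being pairwise distinct; such an $\mathcal{A}'$ exists since $\mathcal{A}$ itself works. Showing $|\mathcal{A}'|\le n-1$ is then enough, because any $n-1-|\mathcal{A}'|$ further elements of $\mathcal{A}$ may be added without breaking distinctness.

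By the minimality of $\mathcal{A}'$, for every $x\in\mathcal{A}'$ there exist indices $i\ne j$ whose traces on $\mathcal{A}'\setminus\{x\}$ coincide, so that their traces on $\mathcal{A}'$ differ in exactly the single element $x$. This pair $\{i_x,j_x\}$ is unique: if a third index $k$ also had $A_k\cap\mathcal{A}'$ differing from both of them in exactly $\{x\}$, then two of the three traces would have to coincide, contradicting the defining property of $\mathcal{A}'$. I would then build an auxiliary graph $H$ on vertex set $\{A_1,\ldots,A_n\}$ whose edges are $\{A_{i_x},A_{j_x}\}$ for $x\in\mathcal{A}'$, each edge carrying the label $x$. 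Since an edge of $H$ determines its label uniquely (as the sole coordinate in which the endpoints' traces differ), distinct elements of $\mathcal{A}'$ give distinct edges of $H$, and hence $|E(H)|=|\mathcal{A}'|$.

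The heart of the argument is to show that $H$ is acyclic. Suppose we had a cycle $A_{v_1}A_{v_2}\cdots A_{v_k}A_{v_1}$ in $H$ with consecutive edge labels $x_1,\ldots,x_k$. Passing from $A_{v_i}$ to $A_{v_{i+1}}$ flips exactly the coordinate $x_i$ of the current trace on $\mathcal{A}'$, so closing the cycle forces $\{x_1\}\triangle\cdots\triangle\{x_k\}=\varnothing$, meaning every element of $\mathcal{A}'$ occurs an even number of times among the $x_i$. But a cycle has no repeated edges and each edge of $H$ carries a unique label, so the $x_i$ must be pairwise distinct and each appears exactly once — a contradiction. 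Thus $H$ is a forest on $n$ vertices, giving $|E(H)|\le n-1$ and hence $|\mathcal{A}'|\le n-1$, as required.

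The step most likely to demand care is the clean identification between $\mathcal{A}'$ and $E(H)$: both the uniqueness of the colliding pair $\{i_x,j_x\}$ and the injectivity of $x\mapsto\{A_{i_x},A_{j_x}\}$ are needed, since the parity argument hinges on each edge of a cycle carrying exactly one label. Once this bookkeeping is nailed down, the symmetric-difference computation that rules out cycles is essentially a one-liner.
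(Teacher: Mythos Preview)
The paper does not prove Theorem~\ref{bondy}; it is quoted as a known result of Bondy, with references to~\cite{Bon72,CCHL08}, and then applied. So there is no in-paper argument to compare against.

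Your argument is essentially Bondy's classical proof and is correct in substance, but one sentence is inaccurate. You assert that the colliding pair $\{i_x,j_x\}$ is \emph{unique}; this is false in general. For instance, with $\mathcal{A}'=\{x,y\}$ and traces $\varnothing,\{x\},\{y\},\{x,y\}$ (which is already a minimal separating $\mathcal{A}'$ for four sets), both $\{1,2\}$ and $\{3,4\}$ are pairs whose traces differ in exactly $\{x\}$. Your justification only rules out a \emph{third} index whose trace differs from each of $A_{i_x},A_{j_x}$ in exactly $\{x\}$, not a second disjoint pair. Fortunately your proof never uses this uniqueness: you only need to \emph{choose} one such pair for each $x\in\mathcal{A}'$, and then the injectivity of $x\mapsto\{A_{i_x},A_{j_x}\}$ (which you do establish correctly, since an edge of $H$ determines the single coordinate in which its endpoints' traces differ) already gives $|E(H)|=|\mathcal{A}'|$. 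The symmetric-difference argument ruling out cycles then goes through verbatim, and $|\mathcal{A}'|\le n-1$ follows. So the fix is cosmetic: replace ``the pair $\{i_x,j_x\}$ is unique'' by ``choose any such pair $\{i_x,j_x\}$'' and delete the sentence that follows it.
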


\begin{corollary}
  Let $G=(K\cup S,E)$ be a split graph.  For any $K' \subseteq K$,
  there is a subset $S'$ of $S$ of size at most $|K'|-1$ such that $S'$
  discriminates $K'$.
\end{corollary}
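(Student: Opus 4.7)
The plan is to reduce the corollary to Bondy's theorem by exploiting the fact that any two vertices of the clique $K$ ``see'' the same portion of $K$, so their closed neighborhoods differ only through their trace on $S$. Concretely, for each $u\in K$ define $A_u := N(u)\cap S \subseteq S$. Since $K$ is a clique and $u\in K$, we have $N[u]\cap K = K$ for every $u\in K$. Hence, for any $u,v\in K$,
\[
N[u]=N[v] \iff A_u = A_v,
\]
and for any subset $S'\subseteq S$,
\[
N[u]\cap S' = A_u\cap S'.
\]
So $S'$ discriminates $K'$ if and only if $A_u\cap S'\ne A_v\cap S'$ whenever $A_u\ne A_v$, for $u,v\in K'$.

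Next I would extract the distinct values of $A_u$ as $u$ ranges over $K'$: call them $A_1,\ldots,A_n$, so that $n\le |K'|$. The whole problem now reads: find $S'\subseteq S$ with $|S'|\le |K'|-1$ such that the sets $A_1\cap S',\ldots,A_n\cap S'$ are pairwise distinct.

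I would then split into two easy cases according to the size of $S$. If $|S|\ge n$, Bondy's theorem (Theorem~\ref{bondy}) applied to the family $A_1,\ldots,A_n$ inside $S$ yields a subset $S'\subseteq S$ of size exactly $n-1\le |K'|-1$ with all $A_i\cap S'$ distinct, which is exactly what is needed. If instead $|S|<n$, then in particular $|S|\le n-1\le |K'|-1$, and we can simply take $S'=S$: the sets $A_i\cap S'=A_i$ are already distinct by construction, and the size bound $|S'|\le |K'|-1$ holds automatically.

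I do not expect any real obstacle: the entire content is the translation from ``closed neighborhoods in $G$'' to ``subsets of $S$,'' after which Bondy's theorem is a black box. The only subtlety worth mentioning in the write-up is that one cannot apply Bondy unconditionally, because the family $A_1,\ldots,A_n$ may outnumber $S$ (this happens when the $A_i$'s exploit the exponential room $2^{|S|}$); that is exactly what the trivial case $S'=S$ handles.
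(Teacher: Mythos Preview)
Your proof is correct and follows essentially the same approach as the paper: apply Bondy's theorem to the (at most $|K'|$) pairwise distinct sets among $\{N[v]\cap S : v\in K'\}$. Your write-up is simply more detailed, in particular in spelling out why $N[u]\ne N[v]$ is equivalent to $A_u\ne A_v$ and in explicitly handling the case $|S|<n$ that the hypothesis of Theorem~\ref{bondy} formally requires; the paper glosses over this and just invokes Bondy in one line.
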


\begin{proof}
  We apply Theorem~\ref{bondy} to the (at most) $|K'|$ pairwise
  distinct sets among \{$N[v]\cap S \, | \, v\in K'\}$.
\end{proof}

One can easily show that every split graph $G$ has lid-chromatic number
at most $2|K|$ by giving colors $1,\ldots,|K|$ to the vertices of $K$,
colors $|K|+1,\ldots,|K|+k-1$, for some $k\le |K|$, to the vertices of
a smallest discriminating set $S'\subseteq S$ of $K$, and finally color $|K|+k$
to the vertices of $S \setminus S'$. 

\medskip
We now prove the following stronger result:

\begin{theorem}\label{split}
Let $G=(K\cup S,E)$ be a split graph. If $\omega(G) \geq 3$ or if $G$
is a star, then $\chi_{\text{lid}}(G)\leq 2\omega(G)-1$.
\end{theorem}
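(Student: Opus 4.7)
\emph{Plan.} The star case ($\omega(G)=2$) is handled by hand: color the center $1$, a distinguished leaf $2$, and every remaining leaf $3$; the lid condition on each center--leaf edge follows by inspection.

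Assume henceforth $k=\omega(G)\ge 3$. The plan is to refine the $2|K|$-coloring sketched right before the statement and to save the $(2k)$-th color by reusing $K$-colors on $S\setminus S'$. I would color $K=\{u_1,\dots,u_k\}$ with $c(u_i)=i$; invoke the preceding corollary of Bondy's theorem to pick a discriminating set $S'\subseteq S$ with $|S'|\le k-1$, colored injectively with colors in $\{k+1,\dots,2k-1\}$; and for every $v\in S\setminus S'$ set $c(v)=c(u_0)$ for some $u_0\in K\setminus N(v)$, which exists because $|N(v)|\le k-1$ by maximality of $K$. The lid condition on an edge $u_1u_2\in\binom{K}{2}$ with $N[u_1]\ne N[u_2]$ then follows from the discriminating property of $S'$, combined with the fact that its colors live in $\{k+1,\dots,2k-1\}$ while the colors reused on $S\setminus S'$ live in $\{1,\dots,k\}$ and thus cannot cancel the distinction. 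For an edge $uv$ with $u\in K,v\in S$, the inclusion $\{1,\dots,k\}\subseteq c(N[u])$ and $|c(N[v])|\le|N(v)|+1$ settle the case $|N(v)|\le k-2$ by a size comparison.

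The only delicate case is the \emph{near-universal} $v\in S\setminus S'$, i.e.\ $|N(v)|=k-1$: the forced $c(v)=c(u_0(v))$ then yields $c(N[v])=\{1,\dots,k\}$, which equals $c(N[u])$ exactly when $u\in N(v)\cap K$ has no $S'$-neighbor. My fix is to construct $S'$ in two layers: insert one representative of each non-empty near-universal type $T_{u_0}=\{v\in S:N(v)=K\setminus\{u_0\}\}$, and, when fewer than $k$ such types exist, complete $S'$ by applying the preceding corollary to $K\setminus K^*$ (where $K^*$ collects the $u_0$'s of existing near-universal types), obtaining at most $|K\setminus K^*|-1$ extra vertices. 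A direct count gives $|S'|\le k-1$, and $K$-discrimination decomposes cleanly: near-universal representatives distinguish every $u\in K^*$ from everything else by their ``missing'' index, while the extra layer handles pairs within $K\setminus K^*$. When all $k$ near-universal types exist, one representative must be omitted, say the one of $T_{u_k}$: for any uncovered $v\in T_{u_k}$ and any $u_j\in N(v)\cap K$, each of the $k-2$ representatives of $T_{u_i}$ with $i\ne j$, $1\le i\le k-1$, is adjacent to $u_j$, and $k-2\ge 1$ is exactly where the hypothesis $\omega(G)\ge 3$ is used. I expect the main obstacle to be the verification that this two-layer $S'$ simultaneously respects the $k-1$ size bound, discriminates $K$, and covers every near-universal neighbor across all subcases.
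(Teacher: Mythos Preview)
Your argument is correct and follows a genuinely different route from the paper's. The paper fixes an arbitrary minimal discriminating set $S_1$ of size $k-1$ first and then branches on whether some near-universal $x\in S\setminus S_1$ has a neighbour $v_i$ with $N[v_i]\cap S_1=\varnothing$. In the bad case it passes to a smaller set $S_2\subseteq S_1$ of size at most $k-3$ discriminating $K\setminus\{v_{k-1},v_k\}$, reserves the colour $2k-1$ for the whole class $S_x=\{y\in S:N(y)=K\setminus\{v_k\}\}$, and verifies the lid condition by a somewhat delicate case check. Your construction instead engineers $S'$ from the outset so that the bad configuration cannot occur: seeding $S'$ with one representative per non-empty near-universal type guarantees that every vertex of $K$ adjacent to an uncovered near-universal $v\in T_{u_0}$ already sees the representative $r_{u_0}\in S'$, and topping up with a Bondy set for $K\setminus K^*$ restores full discrimination within the $k-1$ budget. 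The trade-off is that the paper's proof is marginally more economical when the naive $S_1$ happens to work (its Case~2 is exactly your basic scheme), whereas your argument is uniform and makes the role of the hypothesis $k\ge 3$ completely explicit: it is precisely the inequality $k-2\ge 1$ needed when all $k$ near-universal types are present and one representative must be dropped.
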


\begin{proof}
Assume that $|K|=k$ and denote the vertices of $K$ by
$v_1,\ldots,v_k$. If $k=1$, then $G$ has no edges and it is clear that
$\chi_{\text{lid}}(G)\leq 1$. If $G=K_{1,n}$, then
$\chi_{\text{lid}}(G)\leq 3$ by Corollary~\ref{tree}. So we can assume
that $k\geq 3$.  If $|S|\leq k-1$ or if $S$ contains a set of size at
most $k-2$ that discriminates $K$, then the result is
trivial. Therefore, we assume that $|S|\geq k$ and consider a minimal
set $S_1$ that discriminates $K$. We can assume that the set $S_1$ has
size precisely $k-1$ and there is no edge $uv$ with
$N[u]=N[v]$. Indeed, if $N[u]=N[v]$ for an edge $uv$, then any set
discriminating $K\setminus\{v\}$ discriminates also $k$.  We consider
two cases.

\medskip
{\bf Case 1.} There is a vertex $x \in S\setminus S_1$ of degree $k-1$
and a neighbor $v_i\in K$ of $x$ such that $N[v_i]\cap
S_1=\varnothing$.  Without loss of generality, we can assume that
$v_i=v_{k-1}$ and that $K \setminus N(x)=\{v_k\}$. Let $S_x=\{y\in S,
N(y)=N(x)=K \setminus \{v_k\}\}$. We have $S_x\cap S_1= \varnothing$
(recall that $v_{k-1}$ has no neighbor in $S_1$) and by definition of
$S_1$, for each vertex $v_i\neq v_{k-1}$, $N[v_i]\cap S_1\neq
\varnothing$ ($S_1$ is a discriminating set).

Let $K_1=K\setminus \{v_{k-1},v_k\}$, and let $S_2$ be a subset of
$S_1$ of size at most $|K_1|-1=k-3$ that discriminates $K_1$. Let
$S'=S\setminus(S_1\cup S_x)$.  We define a coloring $c$ as follows:
\begin{itemize}
\item for $1 \le i \le k$, $c(v_i)=i$;
\item assign pairwise distinct colors from $k+1,\ldots,2k-3$ to the
  vertices of $S_2$;
\item for $u\in S_1\setminus S_2$, $c(u)=2k-2$;
\item for $u\in S_x$, $c(u)=2k-1$;
\item for $u\in S'$, take $v_i \in K \setminus N(u)$ ($v_i$ exists by
 maximality of $K$), and set $c(u)=c(v_i)$.
\end{itemize}

Then $c$ is a proper coloring of $G$. We show that $c$ is a
lid-coloring of $G$.  First observe that for each vertex $v_i$ of $K$,
$c(N[v_i])$ contains one color of $\{k+1,\ldots,2k-1\}$. Indeed $2k-1 \in
c(N[v_{k-1}])$ and if $v_i\neq v_{k-1}$, then $N[v_i]\cap S_1 \neq
\varnothing$ and therefore $c(N[v_i])\cap \{k+1,\ldots,2k-2\}\neq
\varnothing$.  This implies that for each $v_i \in K$, $c(N[v_i])$ is
distinct from all $c(N[y])$, $y\in S$. In fact, either $c(y)\in c(K)$
and then $c(N[y])\subseteq c(K)$, or $c(y)\notin c(K)$ but then there
is at least one color of $c(K)$ that $c(N[y])$ does not contain.
Furthermore, $c(N[v_{k}])$ is different from all the sets $c(N[v_{i}])$
with $i\neq k$ because $2k-1 \in c(N[v_{i}])$ and $2k-1 \notin
c(N[v_{k}])$. The set $c(N[v_{k-1}])$ is different from all the sets
$c(N[v_{i}])$ with $i\neq k-1$ because $c(N[v_{k-1}])$ contains no
color of $c(S_1)$ whereas $c(N[v_{i}])$ contains at least one color of
this set.  Finally, $c(N[v_{i}])\neq c(N[v_{j}])$ for $i,j\leq k-2$
because there is a vertex in $S_2$ that separates them and its color
is used only once. Hence, for each edge $uv$ of $G$ such that
$N[u]\neq N[v]$, we have $c(N[u])\neq c(N[v])$.

\medskip
{\bf Case 2.} For each vertex $x$ of $S\setminus S_1$, either $x$ has
degree at most $k-2$ or $x$ has degree $k-1$ and each vertex of $N(x)$
has a neighbor in $S_1$. We define a coloring $c$ as follows: vertices
of $K$ are assigned colors $1,\ldots,k$, and vertices of $S_1$ are
assigned (pairwise distinct) colors within $k+1,\ldots,2k-1$.  For any
vertex $x$ in $S\setminus S_1$, take a vertex $v_i$ in $K \setminus
N(x)$ (such a vertex exists by the maximality of $K$) and set
$c(x)=c(v_i)$.  We claim that $c$ is a lid-coloring of $G$. It is
clear that $c$ is a proper coloring of $G$. Let $uv$ be an edge of $G$
with $N[u]\neq N[v]$.  If $u,v \in K$, then without loss of generality
there is a vertex $w$ of $S_1$ such that, $w\in N[u]$ and $w \notin
N[v]$. Then, $c(w)\in c(N[u])$ and $c(w)\notin c(N[v])$.  Otherwise,
without loss of generality, $u\in K$ and $v \in S$.  If $v\in S_1$, 
then $S_1$ does not contain the whole set $c(K)$ and so $c(N[u])\neq
c(N[v])$.  Otherwise, $v\notin S_1$. If the degree of $v$ is $k-1$, 
then $u$ has a neighbor $w$ in $S_1$ and $c(w)\in c(N[u])$,
$c(w)\notin c(N[v])$. If the degree of $v$ is at most $k-2$, then there is
a color $1 \le i\le k$ such that $i\in c(N[u])$ and $i\notin c(N[v])$.
In all cases, $c(N[u])\neq c(N[v])$. Hence, $c$ is a lid-coloring of
$G$.
\end{proof}

Observe that this bound is sharp: the graph obtained from a $k$-clique
by adding a pendant vertex to each of the vertices of the clique is a
split graph and requires $2k-1$ colors in any lid-coloring.

\section{Cographs \label{sec:cographs}}

A \emph{cograph} is a graph that does not contain the path $P_4$ on 4
vertices as an induced subgraph. Cographs are a subclass of
permutation graphs, and so they are perfect (however, they are not
necessarily chordal). It is well-known that the class of cographs is
closed under disjoint union and complementation~\cite{BLS99}. Let $G
\cup H$ denote the disjoint union of $G$ and $H$, and let $G +H$
denote the complete join of $G$ and $H$, i.e. the graph obtained from $G
\cup H$ by adding all possible edges between a vertex from $G$ and a
vertex from $H$. A consequence of the previously mentioned facts is
that any cograph $G$ is of one of the three following types:
\begin{enumerate}
\item[(S)] $G$ is a single vertex.
\item[(U)] $G=\bigcup_{i=1}^k G_i$ with $k \ge 2$ and every $G_i$ is
  a cograph of type S or J.
\item[(J)]$G=\sum_{i=1}^k G_i$ with $k \ge 2$ and every $G_i$ is a
  cograph of type S or U.
\end{enumerate} 

We will use this property to prove the following theorem:

\begin{theorem}\label{cograph}
  If $G$ is a cograph, then $\chi_{\text{lid}}(G)\leq 2\omega(G)-1$.
\end{theorem}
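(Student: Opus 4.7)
I would prove the theorem by induction on $|V(G)|$, strengthening the hypothesis to the conjunction of \textbf{(A)} the theorem's bound $\chi_{\text{lid}}(G)\le 2\omega(G)-1$, and \textbf{(B)} the existence of a lid-coloring using at most $2\omega(G)$ colors satisfying the \emph{strong property} that $c(N[u])\ne c(V(G))$ for every non-universal vertex $u$. The two claims would be proved in tandem. The base case $G=K_1$ is immediate. When $G=\bigcup_{i=1}^m G_i$ is of type U, claim (A) is obtained by coloring each $G_i$ on the common palette $\{1,\dots,2\omega(G)-1\}$ (with no conflict across components); claim (B) is obtained by using one extra color and assigning the component palettes asymmetrically---e.g.\ giving one component a palette containing the fresh color $2\omega(G)$ that another component avoids---so that $c(V(G_i))$ is a proper subset of $c(V(G))$ for every $i$, which forces the strong property.

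The heart of the argument is case (A) when $G=\sum_{i=1}^k G_i$ is of type J with $k\ge 2$ and each $G_i$ of type S or U. I would assign disjoint palettes $C_i$ to the parts, and a short computation then shows that for an adjacent pair $u\in G_i$, $v\in G_j$ with $N_G[u]\ne N_G[v]$, one can have $c(N_G[u])=c(N_G[v])$ only if \emph{both} $c(N_{G_i}[u])=c(G_i)$ and $c(N_{G_j}[v])=c(G_j)$. Call a vertex \emph{colorful} if such an equality holds in its own part; observe that the universal vertices of $G$ arise exactly from its type-S parts, and that the strong property of (B) states precisely that non-universal vertices are not colorful. The plan is to ``cheap out'' on one part: if some $G_j$ is of type S, pick it and assign it a single color; otherwise pick any part and apply (A) to it. All remaining parts receive their (B)-colorings from the induction, on disjoint palettes. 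The total is at most $2\omega(G)-1$ in either subcase, and the strong property on the non-cheap parts rules out all cross-part lid conflicts. Claim (B) for type J follows by applying (B) to every part on disjoint palettes, totaling at most $2\omega(G)-s\le 2\omega(G)$ colors, where $s$ is the number of type-S parts.

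The main obstacle is the subtlety inside case (A) for type J: cheaping out on a type-U part in the presence of a type-S part would allow a colorful non-universal vertex $u$ of the cheap part to collide with the universal vertex $v$ of the type-S part, giving $c(N_G[u])=c(V(G))=c(N_G[v])$ while $N_G[u]\ne N_G[v]$, a lid violation. The case-split above is designed precisely to avoid this: when a type-S part exists the cheap vertex $u$ is itself universal, and when no type-S part exists no $v$ can be universal in $G$ either. Once this is in place every cross-part pair has at least one endpoint in a strongly-colored part, and the cross-part lid condition is automatic.
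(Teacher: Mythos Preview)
Your proposal is correct and follows essentially the same strategy as the paper: strengthen the induction by simultaneously proving the $2\omega-1$ bound together with the existence of a \emph{strong} lid-coloring on $2\omega$ colors (one in which only universal vertices may see the whole palette), then in the join case spend the ``cheap'' non-strong coloring on a single part and the strong colorings on the rest, with disjoint palettes. The paper implements the same idea by ordering the type-S parts first and always making $G_1$ the cheap part, which is exactly your case split (cheap part is type S if one exists, arbitrary otherwise) in disguise; your handling of type U and of claim (B) likewise matches the paper's, up to cosmetic differences in how the extra color is introduced and how the strong property is phrased ($c(N[u])\ne c(V(G))$ versus $c(N[v])\ne\{1,\dots,k\}$).
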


\begin{proof}
A {\em universal} vertex of $G$ is a vertex adjacent to all the vertices of $G$.
Observe that if a cograph $G$ has a universal vertex, then $G$ must be of
type S or J. 
Let $\widetilde{\chi}_{\text{lid}}(G)$ be the least integer $k$ such that $G$
has a lid-coloring $c$ with colors $1,\ldots,k$ such that for any
vertex $v$ that is not universal, $c(N[v]) \ne \{1,\ldots,k\}$ (in
other words, if a vertex sees all the colors, then it is
universal). Such a coloring is called a \emph{strong lid-coloring} of
$G$. We will prove the following result by induction:

\medskip
  \emph{Claim. For any cograph $G$, $\chi_{\text{lid}}(G)\le
    2\omega(G)-1$ and $\widetilde{\chi}_{\text{lid}}(G)\le
    2\omega(G)$.}

\medskip
If $G$ is a single vertex, then it is universal and therefore
$\widetilde{\chi}_{\text{lid}}(G)=\chi_{\text{lid}}(G)=1=2\times 1-1$ and the
assumption holds.

Assume now that $G$ is of type J. There exist $G_1,\ldots,G_k$, $k\ge
2$, each of type S or U, such that $G=\sum_{i=1}^k G_i$. Let
$G_1,\ldots,G_s$ ($0 \le s \le k$) be of type S and
$G_{s+1},\ldots,G_k$ be of type U. Consider a lid-coloring $c_1$ of
$G_1$ and a strong lid-coloring $c_i$ of $G_i$ for $2\le i\le k$, such
that the sets of colors within $G_i$ and $G_j$, $i\ne j$, are
disjoint.  Then the coloring $c$ of $G$ defined by $c(v)=c_i(v)$ for
any $v \in G_i$ is a lid-coloring of $G$.  To see this, assume two
adjacent vertices $u$ and $v$ such that $N[u]\ne N[v]$ and
$c(N[u])=c(N[v])$. Since every $c_i$ is a lid-coloring of $G_i$ the
vertices $u$ and $v$ must be in different $G_i$'s, say $u\in G_i$ and
$v \in G_j$, $i < j$. But then in order to have $c(N[u])=c(N[v])$, $u$
and $v$ must see all the colors in $c_i$ and $c_j$, respectively.
Since $c_j$ is a strong lid-coloring of $G_j$, $v$ is universal in
$G_j$.  This means that $G_j$ (and therefore $G_i$) is of type
S. Hence, $u$ and $v$ are universal in $G$. This contradicts the fact
that $N[u]\neq N[v]$. As a consequence $c$ is a lid-coloring of $G$.

If $c_1$ is a strong lid-coloring of $G_1$, then $c$ is a strong
lid-coloring of $G$: take a vertex $v \in G_i$ that sees all the
colors in $c$.  Then it also sees all the colors in $c_i$, so it is
universal in $G_i$ and $G$.

So we have $\chi_{\text{lid}}(G)\le
\chi_{\text{lid}}(G_1)+\sum_{i=2}^k\widetilde{\chi}_{\text{lid}}(G_i)$ and
$\widetilde{\chi}_{\text{lid}}(G)\le \sum_{i=1}^k
\widetilde{\chi}_{\text{lid}}(G_i)$.  Since $\omega(G)=\sum_{i=1}^k
\omega(G_i)$ we have by induction:
$$\chi_{\text{lid}}(G)\le 2\omega(G_1)-1+\sum_{i=2}^k 2 \omega(G_i) =
2\times \sum_{i=1}^k \omega(G_i)-1 = 2 \omega(G)-1$$ and
$$\widetilde{\chi}_{\text{lid}}(G)\le \sum_{i=1}^k 2 \omega(G_i)= 2 \omega(G).$$

\medskip
Assume now that $G$ is of type U. There exist $G_1,\ldots,G_k$, $k\ge
2$, each of type S or J, such that $G=\bigcup_{i=1}^k G_i$.  Consider
a lid-coloring $c_i$ of $G_i$ with colors $1,\ldots,\chi_{\text{lid}}(G_i)$.
Without loss of generality we have $\chi_{\text{lid}}(G_1) = \max_{i=1}^k
\chi_{\text{lid}}(G_i)$.  The coloring $c$ of $G$ defined by $c(v)=c_i(v)$
for any $v\in G_i$ is clearly a lid-coloring of $G$, and so
$\chi_{\text{lid}}(G)=\max_{i=1}^k \chi_{\text{lid}}(G_i)$.

 To obtain a strong lid-coloring, assign a new color
 $\chi_{\text{lid}}(G_1)+1$ to all the vertices colored $1$ in $G_1$, and
 color all the other vertices of $G$ as they were colored in $c$. The
 obtained coloring $c'$ is still a lid-coloring of $G$.  Since no
 vertex $u$ satisfies $c(N[u])=\{1, \ldots, \chi_{\text{lid}}(G_1)+1\}$ (the
 vertices in $G_1$ miss the color 1, while the others miss the color
 $\chi_{\text{lid}}(G_1)+1$), $c'$ is also a strong lid-coloring of
 $G$. Therefore $\widetilde{\chi}_{\text{lid}}(G)\le \max_{i=1}^k
 \chi_{\text{lid}}(G_i)+1$.  Since $\omega(G)=\max_{i=1}^k \omega(G_i)$ we
 have by induction
 $$\chi_{\text{lid}}(G)\le
   \max_{i=1}^k (2\omega(G_i)-1) = 2 \omega(G)-1$$
and  $$\widetilde{\chi}_{\text{lid}}(G)\le
   \max_{i=1}^k (2\omega(G_i)-1)+1 = 2 \omega(G).$$
\end{proof}

The bound of Theorem~\ref{cograph} is tight. The following
construction gives an example of cographs of clique number $\omega$
requiring $2\omega-1$ colors in any lid-coloring. For any $k \ge 1$,
take a complete graph with vertex set $v_1, \ldots, v_k$ and for each
$2 \le i \le k$ add a vertex $u_i$ such that
$N(u_i)=\{v_i,v_{i+1},\ldots,v_k\}$. This graph is a cograph with
clique number $k$, the vertices $u_i$ form an independent set $U$, and
every vertex $v_i$ satisfies $N(v_i)\cap U=\{u_2,\ldots,u_i\}$. Let
$c$ be a lid-coloring of this graph, then for any $3 \le i \le k$ the
vertex $u_i$ must be assigned a color distinct from $c(u_2),\ldots,
c(u_{i-1})$ and $c(v_1),\ldots,c(v_k)$ since otherwise we would have
$c(N[v_i])=c(N[v_{i-1}])$. Hence, at least $k+(k-1)=2k-1$ distinct
colors are required.

\medskip
As mentionned in Section~\ref{sec:ktree}, for fixed $t$, the fact that a graph admits a lid-coloring with at most $t$ colors can be expressed in monadic second-order logic.
It is well known that the class of cographs is exactly the class of graphs with clique-width at most two. 
It follows from~\cite{CMR00} and Theorem~\ref{cograph} that, for a fixed $k$, the lid-chromatic number of a
cograph of clique number at most $k$ can be computed in linear time.

\medskip
Given the results in Sections~\ref{sec:bip} to \ref{sec:cographs}, it
seems natural to conjecture that every perfect graph $G$ has
lid-chromatic number at most $2 \chi(G)$. This is not true, however,
as the following example shows. Take three stable sets $S_1,S_2,S_3$,
each of size $k$ ($k\geq 2$), add all possible edges between $S_1$ and
$S_2$, add a perfect matching between $S_1$ and $S_3$, and add the
complement of a perfect matching between $S_2$ and $S_3$. The obtained
graph $G_k$ is perfect: since the subgraph of $G_k$ induced by $S_1$
and $S_2$ is a complete bipartite graph, an induced subgraph of $G_k$
is bipartite if and only if it does not have a triangle, and is
3-colorable otherwise.

Consider a lid-coloring $c$ of $G_k$, and a vertex $x_2$ of $S_2$. Let
$x_3$ be the only vertex of $S_3$ that is not adjacent to $x_2$, and
$x_1$ be the unique neighbor of $x_3$ in $S_1$. Observe that
$N[x_1]=N[x_3] \cup \{x_2\}$. Since $c(N[x_1]) \ne c(N[x_3])$, the
color of $x_2$ appears only once in $S_2$. Hence, all the vertices of
$S_2$ have distinct colors and it follows that $\chi_{\text{lid}}(G_k)\ge
k+2$, whereas $\chi(G_k)=\omega(G_k)=3$.

\section{Graphs with bounded maximum degree \label{sec:bmaxd}}

\begin{proposition}
If a graph $G$ has maximum degree $\Delta$, then $\chi_{\text{lid}}(G) \le
\Delta^3-\Delta^2+\Delta+1$.
\end{proposition}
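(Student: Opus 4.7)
My plan is to color the cube $G^3$ of $G$ greedily and verify that any proper coloring of $G^3$ is automatically a locally identifying coloring of $G$. Here $G^3$ denotes the graph on $V(G)$ where two distinct vertices are adjacent whenever their graph-distance in $G$ is at most $3$.

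First I would bound the maximum degree of $G^3$. A vertex $v$ has at most $\Delta$ vertices at $G$-distance exactly $1$, at most $\Delta(\Delta-1)$ at distance exactly $2$, and at most $\Delta(\Delta-1)^2$ at distance exactly $3$, so
\[
\Delta(G^3) \;\le\; \Delta + \Delta(\Delta-1) + \Delta(\Delta-1)^2 \;=\; \Delta^3 - \Delta^2 + \Delta.
\]
A greedy coloring of $G^3$ therefore uses at most $\Delta^3-\Delta^2+\Delta+1$ colors, giving a coloring $c$ of $V(G)$ in which any two distinct vertices at $G$-distance at most $3$ receive different colors.

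Next I would check that $c$ is locally identifying. Properness is immediate since adjacent vertices are at $G$-distance $1\le 3$. Let $uv$ be an edge of $G$ with $N[u]\ne N[v]$, and pick (without loss of generality) a vertex $w\in N[u]\setminus N[v]$. Since $u\sim v$ we have $u,v\in N[v]$, so $w\notin\{u,v\}$; in particular $w$ is a proper neighbor of $u$. I claim $c(w)\notin c(N[v])$, which yields $c(N[u])\ne c(N[v])$. Suppose instead that $c(w)=c(w')$ for some $w'\in N[v]$. Then $w\ne w'$ (as $w\notin N[v]$), and the walk $w\sim u\sim v$ (if $w'=v$) or $w\sim u\sim v\sim w'$ (otherwise) shows $1\le d_G(w,w')\le 3$. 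Hence $w$ and $w'$ are adjacent in $G^3$, contradicting the properness of $c$ on $G^3$.

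There is no real obstacle: the argument reduces to a standard greedy coloring of a power of the graph, and the only thing to verify is that the distance-$3$ constraint is exactly what is needed to distinguish adjacent closed neighborhoods. The subtlety to keep track of is simply that the symmetric difference $N[u]\triangle N[v]$ must contain a vertex $w\notin\{u,v\}$ (guaranteed because $u,v$ lie in both $N[u]$ and $N[v]$), so that the corresponding duplicate $w'$ sits at $G$-distance at most $3$ from $w$ and the coloring of $G^3$ rules out $c(w)=c(w')$.
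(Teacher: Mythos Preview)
Your proposal is correct and follows essentially the same approach as the paper: properly color $G^3$ using a greedy bound of $\Delta^3-\Delta^2+\Delta+1$ colors, then argue that any such distance-$3$ coloring is automatically a lid-coloring. The paper phrases the verification by showing that the color sets of $N(u)\setminus N[v]$ and $N(v)\setminus N[u]$ are disjoint, while you pick a single witness $w\in N[u]\setminus N[v]$ and show its color cannot recur in $N[v]$; these are the same argument in different clothing.
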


\begin{proof}
Let $c$ be a coloring of $G$ so that vertices at distance at most
three in $G$ have distinct colors. Since every vertex has at most
$\Delta^3-\Delta^2+\Delta$ vertices at distance at most three, such a
coloring using at most $\Delta^3-\Delta^2+\Delta+1$ colors exists. Let
$uv$ be an edge of $G$. Let $N_u$ be the set of neighbors of $u$ not
in $N[v]$ and $N_v$ be the set of neighbors of $v$ not in $N[u]$.
Using that vertices at distance at most two in $G$ have distinct
colors, we obtain that all the elements of $N_u$ (resp. $N_v$) have
distinct colors. Since vertices at distance at most three have
distinct colors, the sets of colors of $N_u$ and $N_v$ are disjoint. If
$N[u] \ne N[v]$, then $N_u \cup N_v \ne \varnothing$, and $c(N[u])\neq
c(N[v])$ by the previous remark.
\end{proof}

We believe that this result is not optimal, and that the bound should
rather be quadratic in $\Delta$:

 \begin{question}\label{delta2}
Is it true that for any graph $G$ with maximum degree $\Delta$, we
have $\chi_{\text{lid}}(G) =O( \Delta^2)$?
\end{question}

If true, then this result would be best possible. Take a projective
plane $P$ of order $n$, for some prime power $n$. Let $G_{n+1}$ be the
graph obtained from the complete graph on $n+1$ vertices by adding,
for every vertex $v$ of the clique, a vertex $v'$ adjacent only to
$v$. Note that in any lid-coloring of $G_{n+1}$, all vertices $v'$
must receive distinct colors. For any line $l$ of the projective plane
$P$, consider a copy $G_{n+1}^l$ of $G_{n+1}$ in which the new
vertices $v'$ are indexed by the $n+1$ points of $l$. For any point
$p$ of $P$, identify the $n+1$ vertices indexed $p$ in the graphs
$G_{n+1}^l$, where $p \in l$, into a single vertex $p^*$. The
resulting graph $H_{n+1}$ is $(n+1)$-regular and has $(n^2+n+1)(n+2)$
vertices.  By construction, all the vertices $p^*$, $p \in P$, have
distinct colors in any lid-coloring. Hence, at least
$n^2+n+1=\Delta^2-\Delta+1$ colors are required in any lid-coloring of
this $\Delta$-regular graph. The 3-regular graph $H_3$ with
$\chi_{\text{lid}}(H_3) \ge 7$ is depicted in
Figure~\ref{fig:clique3}.

\begin{figure}[htbp]
  \begin{center}
    \includegraphics[scale=0.6]{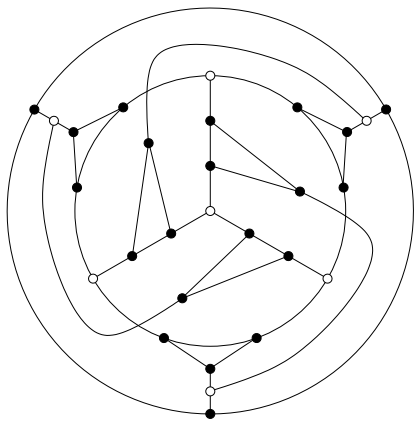}
    \caption{In any lid-coloring of the 3-regular graph $H_3$, the
      seven white vertices must receive pairwise distinct colors. 
      \label{fig:clique3}}
  \end{center}
\end{figure}

\medskip
We saw that the lid-chromatic number cannot be upper-bounded by the
chromatic number. For a graph $G$, the square of $G$, denoted by
$G^2$, is the graph with the same vertex set as $G$, in which two
vertices are adjacent whenever they are at distance at most two in
$G$. The following question is somehow related to the previous one
(depending on the possible linearity of $f$).

\begin{question}
Does there exist a function $f$ so that for any graph $G$, we have
$\chi_{\text{lid}}(G) \le f(\chi(G^2))$?
\end{question}

\section{Planar and outerplanar graphs \label{sec:planar}}

This section is devoted to graphs embeddable in the plane. A maximal
outerplanar graph is a $2$-tree and so is $6$-lid colorable by
Theorem~\ref{th-ktree}. However, $\chi_{\text{lid}}$ is not monotone under
taking subgraphs and so this result does not extend to all outerplanar
graphs. So we have to use a different strategy to give an upper bound
of the lid-chromatic number on the class of outerplanar graphs.

\begin{theorem}
Every outerplanar graph is $20$-lid-colorable.
\end{theorem}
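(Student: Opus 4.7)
The strategy is to exploit structural features of outerplanar graphs while carefully avoiding the pitfall that lid-coloring is \emph{not} monotone under subgraphs: the $6$-color bound from Theorem~\ref{th-ktree} for maximal outerplanar graphs (which are $2$-trees) cannot simply be invoked by completing $G$ to a triangulation, since deleting edges to recover $G$ may destroy the lid property. This forces a direct approach on the outerplanar graph itself, and explains why the constant jumps from $6$ to $20$.

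First, using the tools of Section~\ref{conn} for gluing lid-colorings of $2$-connected components, I would reduce to the case where $G$ is $2$-connected with at least three vertices. For such a $G$ the outer face is bounded by a unique Hamiltonian cycle $C = v_0 v_1 \cdots v_{n-1} v_0$, all remaining edges are non-crossing chords, and the weak dual $T$ (inner faces as vertices, adjacent iff they share an edge) is a tree. I would then construct a coloring as a pair $c(v) = (a(v), b(v)) \in A \times B$ with $|A|\cdot|B| = 20$, for instance $A = \mathbb{Z}/5\mathbb{Z}$ and $B = \{0,1,2,3\}$. The first coordinate $a(v)$ would be assigned from a BFS-type distance in $G$ modulo $5$ (generalizing the modular assignment used in Theorem~\ref{th-bip}), and would already enforce properness and discriminate many pairs. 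The second coordinate $b(v)$ would be assigned greedily while processing faces in BFS order from a leaf of $T$, serving as a discriminator whenever $a$ alone fails to distinguish $c(N[u])$ from $c(N[v])$.

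To verify the lid property for an edge $uv$ with $N[u]\ne N[v]$, I would split into the cases where $uv$ is a cycle edge versus a chord, and whether $u$ and $v$ lie in the same BFS layer or consecutive ones. The outerplanarity is essential here: the neighbors of any vertex are cyclically ordered in the embedding and the chords incident to a high-degree vertex partition the enclosed disk into fans, so the possible shapes of $N[u]$ and $N[v]$ are severely restricted. In each configuration I would exhibit a color lying in the symmetric difference $c(N[u])\triangle c(N[v])$, obtained either from the modular difference along the BFS distance, or from the freshly-assigned $b$-color of a vertex lying in the face currently being processed.

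The main obstacle will be the case where $u$ and $v$ share a BFS layer and almost all their neighbors coincide, so that the first coordinate contributes nothing to the discrimination. Here the success of the argument relies on choosing the assignment of $b$ during the face-by-face sweep so that in every such local configuration, some neighbor of $u$ outside $N[v]$ carries a $b$-value not present in $N[v]$ (and symmetrically). Pinning down the exact rule for $b$ on each newly encountered vertex, and checking by a finite case analysis that $20$ colors always suffice to keep this rule feasible for every incident fan, is the creative core of the proof.
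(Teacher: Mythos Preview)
Your reduction to the $2$-connected case via Theorem~\ref{2connto1conn} does not preserve the constant: that theorem only gives $\chi_{\text{lid}}(G)\le k+\chi(H)$, where $H$ is the subgraph induced by the cut-vertices. A bound of $20$ on the blocks thus yields at most $20+\chi(H)$, and $\chi(H)$ can be $3$ for outerplanar graphs (take a triangle with a pendant at each vertex). The paper does \emph{not} pass to blocks; it works directly on a connected outerplanar $G$, using a maximal outerplanar supergraph $H$ only to fix a cyclic order $x_1,\dots,x_n$ of the vertices and to choose a starting vertex of degree at most two.

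There is a second gap in the coloring itself. You assert that $a(v)=(\text{BFS distance})\bmod 5$ ``would already enforce properness,'' by analogy with Theorem~\ref{th-bip}. But that argument relies on bipartiteness: an outerplanar graph can have edges \emph{inside} a BFS layer, and both endpoints then receive the same $a$-value. So properness must come from $b$, and once $b$ has to carry both properness within layers and discrimination for same-layer edges, four values are very tight. The paper's factorization is the transpose of yours: four layer classes $C_{i\bmod 4}$ with five colors each. Within a layer it colors the vertices in their cyclic order using a rotating palette $c_0c_1c_2c_3$ together with a reserve color $c_\varnothing$, and performs a cyclic permutation of $(c_{(j+1)\bmod 4},c_\varnothing,c_{(j-1)\bmod 4})$ every time it passes a \emph{marked} vertex (the last $L_{i+1}$-neighbor of each vertex of $L_i$). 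This guarantees that any four consecutive vertices of a layer receive distinct colors (hence properness) and that two consecutive vertices of $L_{i-1}$ see different color sets in $L_i$ whenever their $L_i$-neighborhoods differ. Your face-by-face sweep along the dual tree is a plausible alternative organizing device, but you have not written down the rule for $b$, and the ``creative core'' you postpone is exactly where the paper's explicit algorithm does all of the work.
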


\begin{proof}
Let $G$ be a connected outerplanar graph, and let $H$ be any maximal
outerplanar graph containing $G$ (that is, $H$ is obtained by adding
edges to $G$). The graph $H$ is 2-connected, and has minimum degree
two. Consider a drawing of $H$ in the plane, such that all the
vertices lie on the outerface, and take the clockwise ordering
$x_1,\ldots,x_n$ of the vertices around the outerface, starting at
some vertex $x_1$ of degree two in $H$ (and thus at most two in
$G$). This ordering has the following properties:

\begin{itemize}
\item For any four integers $i,j,k,\ell \in \{1,\ldots,n\}$ with
  $i<j<k<\ell$, at most one of the pairs $\{x_i,x_k\}$ and $\{x_j,x_\ell\}$
  corresponds to an edge of $G$.
\item Let $x_{i_0}$ be a vertex and $x_{i_1},\ldots,x_{i_k}$ be its
  neighbors in $G$ such that $x_{i_0},x_{i_1},\ldots,x_{i_k}$ appear
  in clockwise order around the outerface of $H$. The previous
  property implies that, for $1\leq j \leq k$, the neighbors of
  $x_{i_j}$ distinct from $x_{i_0}$ appear (in clockwise order around
  the outerface of $H$) between $x_{i_{j-1}}$ and $x_{i_{j+1}}$ (if
  $j\neq k$) and between $x_{i_{k-1}}$ and $x_{i_0}$ (if
  $j=k$). Moreover, two distinct vertices $x_{i_j}$ and $x_{i_\ell}$
  have at most one common neighbor outside $N[x_{i_0}]$. If such a
  common neighbor exists, then we have $|j-\ell|=1$.
\end{itemize}

   For any $i\ge 1$, let
  $L_i=\{x_{i_1},\ldots,x_{i_{k_i}}\}$ be the set of vertices at
  distance $i$ from $x_1$ in $G$, with $i_1<\cdots<i_{k_i}$, and let
  $L_s$ be the last nonempty $L_i$-set. For the sake of clarity, we
  write $x_1^i,\ldots,x_{k_i}^i$ instead of
  $x_{i_1},\ldots,x_{i_{k_i}}$, and we say that two vertices $x_{j}^i$
  and $x_{j+1}^i$ are \emph{consecutive} in $L_i$. Observe the
  following:
\begin{itemize}
\item A vertex in $L_{i+1}$ has at most two neighbors in $L_i$.
\item Two vertices of $L_i$ have at most one common neighbor in $L_{i+1}$.
\item If two vertices of $L_i$ have a common neighbor in $L_{i+1}$,
  they are consecutive in $L_i$.
\item If two vertices of $L_i$ are adjacent, then they are consecutive
  in $L_i$. This implies that the graph induced by $L_i$ is a disjoint
  union of paths.
\end{itemize}

Indeed, if one of the two first facts was not true, there would be a
subdivision of $K_{2,3}$ in $G$. The two last facts are due to the
embedding of $G$ and $H$ and to the previous properties.  From now on, we
forget about $H$ and consider $G$ only (the sole purpose of $H$ was to
give a clean definition of the order $x_1,\ldots,x_n$). With the facts
above, we can notice that in the ordering of $L_{i+1}$, we find first
the neighbors of $x_1^i$, then the neighbors of $x_2^i$, and so on...

We will color the vertices of $G$ with 20 colors partitioned in four
classes of colors $C_0$, $C_1$, $C_2$ and $C_3$ with
$C_j=\{5j,\ldots,5j+4\}$.  Vertices in $L_i$ will be colored with
colors from $C_{i\bmod 4}$, almost as we did for bipartite graphs in
Theorem \ref{th-bip}. We will slightly modify this coloring by using
{\em marked vertices}.

We start by coloring $x_1$ with color 0, and mark the last vertex
$x_{k_1}^1$ of $L_1$. We then apply Algorithm 1.

\begin{algorithm}[h]
\caption{Lid-coloring of outerplanar graphs}
\begin{algorithmic}[1]
\STATE $c(x_1)=0$
\STATE Mark vertex $x^1_{k_1}$
\FOR{$i=1$ \TO $s$}\label{line:li}
  \FOR{$j=1$ \TO $k_i$} \label{line:mark1}
  \STATE Mark, if it exists, the last neighbor of $x^i_j$ in $L_{i+1}$.
  \ENDFOR\label{line:mark2}
  \FOR{$k=0$ \TO $3$}
  \STATE $c_{k}\leftarrow k+ 5\times (i\bmod4)$ 
  \ENDFOR
  \STATE $c_{\varnothing}\leftarrow 4+5\times (i\bmod4)$ 
  \FOR{$j=1$ \TO $k_i$} 
  \STATE $c(v^i_j)=c_{j\bmod 4}$ \label{line:coloring}
  \IF{$v^i_j$ is marked}\label{line:change1}
  \STATE tmp $\leftarrow c_{(j+1)\bmod 4}$
  \STATE $c_{(j+1)\bmod 4} \leftarrow c_{\varnothing}$
  \STATE $c_{\varnothing}\leftarrow c_{(j-1)\bmod 4}$
  \STATE $c_{(j-1)\bmod 4} \leftarrow$ tmp
  \ENDIF\label{line:change2}
 \ENDFOR
\ENDFOR
\RETURN $c$
\end{algorithmic}
\end{algorithm}

\medskip 
Let us describe this algorithm. Sets $L_i$ are colored one after the
other (line \ref{line:li}).  When we color $L_i$, we first mark some
vertices in $L_{i+1}$ (the last neighbors in $L_{i+1}$ of vertices in
$L_i$, see lines \ref{line:mark1} to \ref{line:mark2}).  Then we color
vertices of $L_i$ in the order they appear.  There are four current
colors of $C_{i\bmod 4}$ which are used, $c_0$ to $c_3$ and one
forbidden color $c_{\varnothing}$, that are originally set to $5\times
(i\bmod4)$, $1+5\times (i\bmod4)$, $2+5\times (i\bmod4)$, $3+5\times
(i\bmod4)$, and $4+5\times (i\bmod4)$, respectively. The vertices of
$L_i$ are then colored with the pattern $c_0c_1c_2c_3c_0$...  (line
\ref{line:coloring}), but every time a marked vertex $v_j^i$ is
colored, we perform a cyclic permutation on the values of
$c_{(j+1)\bmod 4}$, $c_{\varnothing}$, and $c_{(j-1)\bmod 4}$ (lines
\ref{line:change1} to \ref{line:change2}). This is done in such a way
that:
\begin{itemize}
\item The coloring is proper.
\item Four consecutive vertices in $L_i$ receive four different
  colors.
\item Two consecutives vertices of $L_{i-1}$ do not have the same set
  of colors in their neighborhood in $L_i$, when these neighborhoods
  differ.
\end{itemize}

Thus, this algorithm provides a proper coloring $c$ of $G$ with $20$
colors such that for any $i$, $c(L_i) \subseteq C_{i \bmod 4}$.

\medskip 
Let us prove that the coloring given by the algorithm is locally
identifying.  Let $uv$ be an edge of $G$ such that $N[u]\neq N[v]$. If
$uv$ is not an edge of a layer $L_i$, then we can assume that $u\in
L_i$ and $v\in L_{i+1}$. If $u\neq x_1$, then there is a neighbor $t$
of $u$ in $L_{i-1}$ and then $c(t)\notin c(N[v])$.  So we may assume
that $u=x_1$. If the vertex $v$ has degree 1, then $u$ has degree $2$
and has an other neighbor, $t$, and $c(t)\notin c(N[v])$.  Otherwise,
the vertex $v$ has degree at least $2$, so there is a neighbor $t\neq
u$ of $v$.  If $t\in L_1$ then there is another neighbor $t'$ of $v$
in $L_2$ (because $N[u]\neq N[v]$).  So we can assume that $t\in L_2$
and then $c(t)\notin c(N[u])$. So in any case, $c(N[u])\neq c(N[v])$.

\medskip 
Assume now that $u,v \in L_i$ for some $i$. Without loss of
generality, we may assume that $u=x^i_j$, $v=x^i_{j+1}$ for some $j$
and that there is a vertex $t$ adjacent to exactly one vertex among
$\{u,v\}$.  If $t\in L_i$, then we are done because four consecutive
vertices have different colors in $L_i$. If $t\in L_{i-1}$, and $t\in
N(u)\setminus N(v)$, then $v$ has at most two neighbors in $L_{i-1}$.
Those neighbors (if any) are just following $t$ in the layer $L_{i-1}$
and so $c(t)\notin c(N[v])$.  Otherwise, $t\in L_{i+1}$, the vertices
$u$ and $v$ are consecutive and have distinct neighborhoods in
$L_{i+1}$, so the sets of colors in their neighborhoods in $L_{i+1}$
are distinct.
\end{proof}

We believe that this bound is far from tight.

\begin{question}
Is it true that every outerplanar graph $G$ satisfies
$\chi_{\text{lid}}(G)\leq 6$?
\end{question}

We now prove that sparse enough planar graphs have low lid-chromatic
number.

\begin{theorem} \label{p+g} If $G$ is a planar graph with girth at
  least 36, then $\chi_{\text{lid}}(G)\le 5$.
\end{theorem}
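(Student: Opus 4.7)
My plan is to prove Theorem~\ref{p+g} by the discharging method applied to a minimum counterexample, with reducible configurations concentrated on threads of degree-$2$ vertices. Let $G$ be a planar graph of girth at least $36$ with $\chi_{\text{lid}}(G) > 5$, chosen to minimize $|V(G)|$; I may assume $G$ is connected and has minimum degree at least~$1$. The aim is to exhibit a local configuration that is both \emph{reducible} and \emph{unavoidable}: given a $5$-lid-coloring of a smaller graph obtained by deleting or contracting the configuration, one can lift it to a $5$-lid-coloring of $G$, contradicting minimality.

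The key reducibility tool will be an \emph{Extension Lemma} along the following lines. There is a constant $t_0$ such that, for any path $P=v_0v_1\cdots v_t$ with $t\ge t_0$ and any prescribed $5$-coloring of the four endpoint vertices $v_0,v_1,v_{t-1},v_t$ that is proper on the two end edges, one can complete a proper $5$-coloring of the interior so that every sliding triple $\{c(v_{i-1}),c(v_i),c(v_{i+1})\}$ is distinct from its two immediate neighbors in the sequence. This window condition is exactly what is needed for an internal degree-$2$ vertex $v_i$ to satisfy $c(N[v_i])\ne c(N[v_{i\pm1}])$; it is achievable because the periodic pattern $0,1,2,3,4,0,1,2,3,4,\ldots$ already produces five cyclically distinct windows $\{k,k+1,k+2\}\bmod 5$, giving slack to absorb any boundary condition on each side. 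Using the Lemma, I would then show that $G$ contains no \emph{thread} (maximal subpath whose internal vertices have degree $2$ in $G$) of length at least $t_0$: otherwise contract such a thread to length $t_0-1$, apply minimality to the smaller planar graph (which still has girth $\ge 36$) to obtain a $5$-lid-coloring, and lift it via the Lemma.

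The final step is a standard discharging argument. Assign each vertex $v$ initial charge $d(v)-\tfrac{36}{17}$ and each face $f$ initial charge $\tfrac{2}{17}(|f|-36)$ (up to a convenient rescaling), so that Euler's formula combined with the girth assumption yields $\sum\mu<0$. Design discharging rules that push charge from long faces along their boundaries to the degree-$2$ vertices populating the threads; the upper bound $t_0-1$ on thread length derived above should guarantee that every vertex and face ends with non-negative final charge, a contradiction. The main obstacle will be synchronizing the three parameters — the extension length $t_0$ delivered by the Lemma, the short-thread bound from reducibility, and the arithmetic of the discharging scheme — so that the girth constant $36$ is exactly sufficient. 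Finally, the use of five colors (rather than the four that suffice for bipartite graphs by Theorem~\ref{th-bip}) is essential here because a planar graph of girth $36$ may still contain odd cycles, so the simple distance-mod-$4$ coloring does not apply directly and the extra color is what provides the flexibility for the Extension Lemma.
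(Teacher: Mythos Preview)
Your overall scheme---minimum counterexample, long threads of degree-$2$ vertices as the reducible configuration, unavoidability coming from the girth hypothesis---is the same shape as the paper's argument, and it can be made to work. But two steps in your plan are genuinely broken as written, and a third is unnecessarily heavy.

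\medskip
\textbf{Contraction destroys the girth hypothesis.} You write: ``contract such a thread to length $t_0-1$, apply minimality to the smaller planar graph (which still has girth $\ge 36$)''. That parenthetical is false. Any cycle passing through the thread gets shorter when you shorten the thread, so a graph of girth exactly $36$ can drop below $36$ after your operation, and minimality no longer applies. The fix (and what the paper does) is to \emph{delete} the interior vertices of the thread rather than contract. The resulting graph is an honest subgraph, so its girth can only go up.

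\medskip
\textbf{The extension lemma is missing the branch-vertex side.} Your window condition controls $c(N[v_i])$ versus $c(N[v_{i\pm 1}])$ for interior $i$, but it says nothing about the constraint at $i=0$: you must also guarantee that $c(N_G[v_0])\ne c(N_G[w])$ for every \emph{other} neighbor $w$ of the branch vertex $v_0$. After deletion you have a lid-coloring of $G'$ in which $c(N_{G'}[v_0])\ne c(N_{G'}[w])$; but once you insert $v_1$, the set $c(N_G[v_0])=c(N_{G'}[v_0])\cup\{c(v_1)\}$ may now coincide with some $c(N_{G'}[w])$. The paper handles this by strengthening the induction hypothesis: it asks for a \emph{nice} lid-coloring, meaning $|c(N[v])|=3$ for every vertex of degree at least $2$. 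With that in hand one simply chooses $c(v_1)\in c(N_{G'}[v_0])\setminus\{c(v_0)\}$, so $c(N_G[v_0])=c(N_{G'}[v_0])$ and nothing at $v_0$ changes. Without some such strengthening your Extension Lemma cannot close the argument, no matter how large $t_0$ is.

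\medskip
\textbf{Discharging is overkill here.} You do not need to design charges and rules. The paper simply invokes the well-known fact that a planar graph of girth at least $5k+1$ contains either a vertex of degree at most $1$ or $k$ consecutive vertices of degree $2$; with $k=7$ this is exactly girth $36$. After ruling out degree-$\le 1$ vertices (an easy reduction once you have the ``nice'' strengthening), you get a path $x_1x_2\cdots x_9$ with $d(x_1)\ge 3$ and $d(x_i)=2$ for $2\le i\le 8$, delete $x_2,\ldots,x_8$, and extend by a finite case analysis on $(c(x_9);c(N(x_9)))$. Seven interior vertices turn out to be exactly enough freedom for the extension, which is why the constant is $36$.
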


\begin{proof}
Let us call \emph{nice} a lid-coloring $c$ using at most 5 colors
  such that every vertex $v$ with degree at least 2 satisfies
  $|c(N[v])|= 3$.
We show that every planar graph with girth at least 36 admits a nice
lid-coloring.

\medskip 
Observe first that a cycle of length $n\geq 12$ has a nice
lid-coloring that consists of subpaths of length 4 colored 1234 and
subpaths of length 5 colored 12345 following the clockwise orientation
of $G$ (the number of subpaths of length $5$ is exactly $n \bmod 4$).

\medskip
Suppose now that $G$ is a planar graph with girth at least 36 that does
not admit a nice lid-coloring and with the minimum number of vertices.
Let us first show that $G$ does not contain a vertex of degree at most
1.  The case of a vertex of degree 0 is trivial, so suppose that $G$
contains a vertex $u$ of degree 1 adjacent to another vertex $v$.  By
minimality of $G$, the graph $G'=G\setminus u$ admits a nice
lid-coloring $c$.  We consider three cases according to the degree of
$v$ in $G'$, and in all three cases, we extend $c$ to a nice
lid-coloring of $G$ in order to obtain a contradiction. If $v$ has
degree at least 2 in $G'$, then we assign to $u$ a color in
$c(N[v])\setminus\{c(v)\}$. So $c(N[v])$ is unchanged, and $c(N[u])\ne
c(N[v])$ since $|c(N[u])|=2$ and $|c(N[v])|=3$. We thus have a nice
lid-coloring of $G$. If $v$ has degree 1 in $G'$, then $v$ is adjacent
to another vertex $w$ in $G'$ and we assign to $u$ a color that does
not belong to $c(N[w])$.  Such a color exists since $|c(N[w])|\leq 3$ and
the obtained coloring of $G$ is nice: $|c(N[v])|=3$ and $c(N[v])\ne
c(N[w])$ since $c(u)\in c(N[v])$ but $c(u)\not\in c(N[w])$. If $v$ has
degree 0 in $G'$, then $N[u]=N[v]$ in $G$, so $u$ and $v$ need not to be identified.

\medskip 
It follows that $G$ has minimum degree at least 2 and $G$ is not a
cycle.  It is well-known that if the girth of a planar graph is at
least $5k+1$, then it contains either a vertex of degree at most 1, or
a path consisting of $k$ consecutive vertices of degree 2. The graph
$G$ thus contains a path of seven vertices of degree 2. So we can
assume that $G$ contains a path $P=x_1x_2\ldots x_9$ such that
$d(x_1)\geq 3$ ($G$ is not a cycle), $d(x_i)=2$ for $2\leq i \leq 8$,
and $d(x_9)\geq 2$.  By minimality of $G$, the graph
$G'=G\setminus\{x_2,x_3,\ldots,x_8\}$ admits a nice lid-coloring
$c$. Without loss of generality, assume that $c(x_1)=1$ and
$c(N[x_1])=\acc{1,2,3}$, since the degree of $x_1$ is at least $2$ in
$G'$.  We denote $a=c(x_9)$. If the degree of $x_9$ in $G'$ is at
least 2, then we denote $\acc{b_1,b_2}=c(N(x_9))$. If the degree of
$x_9$ in $G'$ is 1, then $x_9$ is adjacent to a vertex $x_{10}$ and we
denote $b_1=c(x_{10})$ and $b_2$ is any element of
$\acc{1,2,3,4,5}\setminus c(N[x_{10}])$.

 The following table gives the colors of $x_2,x_3,\ldots,x_{8}$ for
 all the possible values of $\paren{a;b_1,b_2}$. Note that $c(x_2)\in
 \{2,3\}$, $c(x_3)\notin \{2,3\}$, $c(x_6)\neq a$, $c(x_7)\notin
 \{a,b_1,b_2\}$, $c(x_8)=b_2$, and four consecutive vertices have
 different colors. This implies that the coloring $c$ can be extended
 to a nice lid-coloring of $G$, a contradiction.

\begin{center}
\begin{tabular}{|c|c|c|c|c|}
\hline 2431243 (1;2,3) & 2431543 (2;1,3) & 2431542 (3;1,2) & 2534152
(4;1,2) & 2435142 (5;1,2)\\ \hline 2431254 (1;2,4) & 2541354 (2;1,4) &
2431254 (3;1,4) & 2431253 (4;1,3) & 2431243 (5;1,3)\\ \hline 2431245
(1;2,5) & 2451345 (2;1,5) & 2431245 (3;1,5) & 2451235 (4;1,5) &
2435124 (5;1,4)\\ \hline 2431254 (1;3,4) & 3512354 (2;3,4) & 2431254
(3;2,4) & 2431253 (4;2,3) &2431243 (5;2,3)\\ \hline 2431245 (1;3,5) &
3412345 (2;3,5) & 2431245 (3;2,5) & 2451235 (4;2,5) & 2435214
(5;2,4)\\ \hline 2531425 (1;4,5) & 3521435 (2;4,5) & 2531425 (3;4,5) &
2534125 (4;3,5) & 2435124 (5;3,4)\\ \hline
\end{tabular}
\end{center}
\end{proof}

We conjecture that planar graphs have bounded lid-chromatic number.

\section{Connectivity and lid-coloring \label{conn}}

Most of the proofs we gave in this article heavily depend on the
structure of the classes of graphs we were considering. We now give a
slightly more general tool, allowing us to extend results on the
$2$-connected components of a graph to the whole graph:

\begin{theorem}\label{2connto1conn}
Let $k$ be an integer and $G$ be a graph such that every $2$-connected
component of $G$ is $k$-lid-colorable.  Let $H$ be the subgraph of $G$
induced by the cut-vertices of $G$.  Then $\chi_{\text{lid}}(G)\leq
k+\chi(H)$.
\end{theorem}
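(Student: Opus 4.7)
The plan is to combine a $k$-lid-coloring $c_B$ of each 2-connected component $B$ of $G$ with a proper $\chi(H)$-coloring $\phi$ of $H$, using disjoint palettes: $\{1,\ldots,k\}$ for non-cut vertices and $\{k+1,\ldots,k+\chi(H)\}$ for cut-vertices. Concretely, I would fix $\phi\colon V(H)\to\{k+1,\ldots,k+\chi(H)\}$ and, for each block $B$, a $k$-lid-coloring $c_B\colon V(B)\to\{1,\ldots,k\}$, then define $c(v)=\phi(v)$ if $v$ is a cut-vertex of $G$, and $c(v)=c_B(v)$ if $v$ is a non-cut vertex of the unique block $B$ containing it.

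Properness of $c$ is immediate from the disjoint palettes: edges linking a cut-vertex and a non-cut vertex are automatically bichromatic, edges between two non-cut vertices lie in a common block $B$ and are separated by $c_B$, and edges between two cut-vertices are edges of $H$ and so are separated by $\phi$.

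For the lid-property, I would consider an edge $uv$ of $G$ with $N_G[u]\ne N_G[v]$, note that $u$ and $v$ share a common block $B$, and split into cases depending on whether each of $u, v$ is a cut-vertex. Since the two palettes are disjoint, the color sets $c(N_G[u])$ and $c(N_G[v])$ decompose cleanly into a block part in $\{1,\ldots,k\}$ and a cut part in $\{k+1,\ldots,k+\chi(H)\}$, which can be compared independently. If at least one of $u, v$ is a cut-vertex, then its closed neighborhood in $G$ extends strictly beyond $B$ into another block, and this extra material typically supplies a color that distinguishes $c(N_G[u])$ from $c(N_G[v])$.

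The main obstacle is the subcase where both $u$ and $v$ are non-cut vertices of $B$: then $N_G[u]=N_B[u]$ and $N_G[v]=N_B[v]$, so the distinction $c_B(N_B[u])\ne c_B(N_B[v])$ has to survive the replacement of the $c_B$-colors of the cut-vertices of $B$ by their $\phi$-colors. If this distinction relied entirely on cut-vertex colors that $\phi$ happens to identify, the naive definition of $c$ can break the lid-property. Resolving this is the heart of the proof and will require a more careful construction, for instance choosing each $c_B$ so that its restriction to the non-cut vertices already separates the two relevant closed neighborhoods, or else building $c$ inductively along the block-cut tree, propagating consistent cut-vertex colors from block to block and spending the $\chi(H)$ extras only where conflicts actually arise.
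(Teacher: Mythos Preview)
Your proposal is not a complete proof, and you are honest about this: you correctly isolate the obstacle (two non-cut vertices $u,v$ in a block $B$ whose $c_B$-distinction rests entirely on cut-vertices that $\phi$ collapses), but neither of your suggested fixes goes through. The first fix---choosing $c_B$ so that the non-cut vertices alone already separate $N_B[u]$ from $N_B[v]$---asks for strictly more than $k$-lid-colorability of $B$ and is not granted by the hypothesis. The second fix is too vague to evaluate; in particular, ``spending the $\chi(H)$ extras only where conflicts actually arise'' does not explain how a single extra colour attached to a cut-vertex can simultaneously resolve conflicts in all the blocks meeting at that vertex.

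More importantly, your basic move---recolouring each cut-vertex with its $\phi$-colour---is the source of the trouble, and the paper avoids it entirely. In the paper's proof the cut-vertices keep their block colours; the $h=\chi(H)$ extra colours never appear \emph{on} cut-vertices but rather \emph{near} them, as colours of neighbours lying in adjacent blocks. Concretely, the paper proves by induction on the number of cut-vertices the following strengthened statement: there is a $(k+h)$-lid-colouring $c$ such that every block $C$ uses only $k$ colours, and the $h$ colours missing from $C$ can be labelled $c_1^C,\ldots,c_h^C$ so that each cut-vertex $v\in C$ of type $t_i$ has $c_i^C\in c(N(v))$ but $c_j^C\notin c(N(v))$ for $j\ne i$. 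The induction peels off a leaf of the block--cut tree at a cut-vertex $u$, colours the detached blocks with a shifted palette that omits the right colour and places a neighbour of $u$ on the colour $c_{t(u)}^C$; the invariant then certifies directly that $c(N[u])\ne c(N[v])$ for every neighbour $v$ of $u$ in $C$, whether or not $v$ is itself a cut-vertex. Since the interior of each block is coloured by an untouched $k$-lid-colouring, your obstacle never arises.
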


\begin{proof}
In this proof, we will consider two different colorings of the
vertices: the lid-coloring of the vertices of $G$ and the proper
coloring of the graph $H$ induced by the cut-vertices. To avoid confusion,
we call {\em type} the color of a cut-vertex in the second
coloring. We prove the following stronger result:

\medskip
{\it Claim: If $t$ is a proper coloring of $H$ with colors
  $t_1,\ldots,t_h$, then $G$ admits a $(k+h)$-lid-coloring $c$ such that
  for each maximal $2$-connected component $C$ of $G$, $(*)$  there are
  $h$ colors not appearing in $c(C)$, say $c_1^C,\ldots,c_{h}^C$, such
  that for every cut-vertex $v$ of $G$ lying in $C$, if $t(v)=t_i$,
  then $c(N(v))$ contains $c^C_i$ but none of the $c^C_j$, $j \ne
  i$.}

\medskip
We prove the claim by induction on the number of cut-vertices of
$G$. We may assume that $G$ has a cut-vertex, otherwise the property
is trivially true.

Let $u$ be a cut-vertex of $G$ and let $C_1,\ldots,C_s$ be the
connected components of $G-u$. We can choose $u$ so that at most one
of the $C_i$'s, say $C_1$, contains the remaining cut-vertices. For
$1\leq i \leq s$, let $G_i$ be the graph induced by the set of
vertices $C_i\cup\{u\}$. Let $C$ be the maximal $2$-connected
component of $G_1$ containing $u$. Observe that the vertex $u$ is not
a cut-vertex in $G_1$. By the induction, $G_1$ has a
$(k+h)$-lid-coloring $c$ such that, without loss of generality,
$c(C)\subseteq \{1,\ldots,k\}$ and every cut-vertex $v$ of $C$ with
$t(v)=t_i$ has a neighbor colored $k+i$, but no neighbor colored
$k+j$, $1\leq j \leq h$, $j \ne i$.  We can also assume that
$t(u)=t_1$ and $1\in c(N(u))$ (thus $c(u)\neq 1$).

We now extend the coloring $c$ to $G$ by lid-coloring each component
$G_2,...G_s$ with colors $2,3,\ldots,k+1$ such that $k+1\in c(N(u))$
(these components share the vertex $u$ but we can assume that $u$
always has the same color in all the lid-colorings of
$G_2$,...,$G_s$). Let us prove that the coloring obtained is a
lid-coloring of $G$ satisfying $(*)$. In order to prove that $c$ is a
lid-coloring, by the induction one just needs to check that $u$ has no
neighbor $v$ with $c(N[v])=c(N[u])$. For the sake of contradiction,
suppose that such a vertex $v$ exists. Since $1\in c(N[u])$, $v$ has
to lie in $C$. If $v$ is a cut-vertex of $G_1$, then $t(v)\ne t_1$
($t$ is a proper coloring of $H$) and by the induction, $k+1 \not\in
c(N[v])$. If $v$ is not a cut-vertex of $G_1$, then all its neighbors
lie in $C$ and again, $k+1 \not\in c(N[v])$. Since $k+1 \in c(N[u])$,
we obtain a contradiction.

It remains to prove that $(*)$ holds for every maximal 2-connected
component of $G$. It clearly does for $G_2$,...,$G_s$, since $u$
is the only cut-vertex of $G$ that they contain and $1 \in
c(N[u])\subseteq \{1,\ldots,k+1\}$, while none of these components
contains color 1 or color $k+i$ with $2\le i \le h$. The component $C$
also satisfies $(*)$, since $u$ has a neighbor colored $k+1$ and no
neighbor colored $k+i$ with $2\le i \le h$. By the induction, Property
$(*)$ trivially holds for the remaining maximal 2-connected components
of $G$. This completes the proof of the claim.
\end{proof}

Among other things, this result can be used to prove that outerplanar graphs
without triangles can be 8-lid-colored. We omit the details; we
suspect that Theorem~\ref{2connto1conn} can be used to prove results
on much wider classes of graphs.

\paragraph{Remark.}During the review of the paper,
Question~\ref{delta2} has been answered positively,
see~\cite{FHLPP12}. 

\paragraph{Acknowledgements.} We would like to acknowledge E.~Duch\^ene about early discussions on the topic of identifying coloring, which inspired this work.
We also would like to thank the referees for careful reading and helpful remarks.


\begin{thebibliography}{1}

\bibitem{ABN06} S.~Akbari, H.~Bidkhori, and N.~Nosrati.  \newblock
  $r$-Strong edge colorings of graphs.  \newblock {\em Discrete
  Math.}, {\bf 306}(23):3005--3010, 2006.

\bibitem{ABB08} N.~Alon, R.~Berke, K.~Buchin, M.~Buchin, P.~Csorba,
  S.~Shannigrahi, B.~Speckmann and P.~Zumstein. 
\newblock Polychromatic colorings of plane graphs.
\newblock  {\em Proc. of the 24 Annual Symposium on
    Computational Geometry}, 338--345, 2008.


\bibitem{BRS03}
P.N. Balister, O.M. Riordan, and R.H. Schelp.
\newblock Vertex-distinguishing edge-colorings of graphs.
\newblock {\em Journal of graph theory}, {\bf 42}:95--109, 2003.

\bibitem{Bon72} 
J.~A.~Bondy. 
\newblock Induced subsets.
\newblock {\em J. Combin. Theory Ser. B}, {\bf 12}(2):201--202, 1972.


\bibitem{BLS99} A.~Brandst\"adt, V.B.~Le, and J.P.~Spinrad. 
\newblock Graph Classes: A Survey. 
\newblock \emph{SIAM Monographs on Discrete Mathematics and
    Applications}, 1999.


\bibitem{BS97}
A.C. Burris and R.H. Schelp.
\newblock Vertex-distinguishing proper edge-colorings.
\newblock {\em Journal of graph theory}, {\bf 26}:73--83, 1997.

\bibitem{Bur74} 
M.I.~Burstein. 
\newblock An upper bound for the chromatic number
  of hypergraphs.
\newblock  \emph{Sakharth. SSR Mecn. Akad. Moambe},
  {\bf 75}:37--40, 1974.

\bibitem{CHS96}
J.~Cern\'y, M.~Hor\v{n}\'ak, and R.~Sot\'ak.
\newblock Observability of a graph.
\newblock {\em Mathematica Slovaca}, {\bf 46}(1):21--31, 1996.

\bibitem{CCHL08} 
 I.~Charon, G.~Cohen, O.~Hudry, and A.~Lobstein.
\newblock Discriminating codes in bipartite graphs: bounds, extremal
    cardinalities, complexity.
\newblock {\em Adv. Math. Comm.}, {\bf 4}(2):403--420, 2008.

\bibitem{CMR00} B.~Courcelle, J.~Makowski and U.~Rotics.  
\newblock
  Linear Time Solvable Optimization Problems on Graphs of Bounded
  Clique Width. 
\newblock {\em Theory Comput. Syst.}, {\bf 33}(2):125--150, 2000.

\bibitem{DTS02}
E.~Ded\'o, D.~Torri, and N.~Zagaglia Salvi.
\newblock The observability of the fibonacci and the lucas cubes.
\newblock {\em Discrete Math.}, {\bf 255}:55--63, 2002.

\bibitem{FLS96}
O.~Favaron, H.~Li, and R.H.~Schelp.
\newblock Strong edge coloring of graphs.
\newblock {\em Discrete Math.}, {\bf 159}:103--109, 1996.

\bibitem{FHLPP12}
F.~Foucaud, I.~Honkaka, T.~Laihonen, A.~Parreau and G.~Perarnau.
\newblock Locally identifying colourings for graphs with given maximum degree.
\newblock {\em Discrete Math.}, {\bf 312}:1832--1837, 2012.

\bibitem{Hat05}
H.~Hatami.
\newblock $\Delta+300$ is a bound on the adjacent vertex
distinguishing edge chromatic number.
\newblock {\em J. Combin. Theory Ser. B}, {\bf 95}:246--256, 2005.

\bibitem{HS95} M.~Hor\v{n}\'ak and R.~Sot\'ak.  
\newblock  Observability of complete multipartite graphs with equipotent parts.
\newblock {\em Ars Combinatoria}, {\bf 41}:289--301, 1995.

\bibitem{HS97}
M.~Hor\v{n}\'ak and R.~Sot\'ak.
\newblock Asymptotic behaviour of the observability of {Q}$_n$.
\newblock {\em Discrete Math.}, {\bf 176}:139--148, 1997.

\bibitem{Lov73}
L.~Lov\'asz.
\newblock Coverings and colorings of hypergraphs.
\newblock {\em Proceedings of the fourth south-eastern conference on
combinatorics, graph theory, and computing}. Boca Raton, Florida,
3--12, 1973.

\bibitem{Mor88} 
B.M.E.~Moret. 
\newblock Planar NAE3SAT is in P. 
\newblock {\em SIGACT News}, {\bf 19}(2):51--54, 1988.

\bibitem{Pen75} 
J.G.~Penaud. 
\newblock Une propri\'et\'e de bicoloration des  hypergraphes planaires.
\newblock  \emph{Cahiers Centre \'Etudes Rech. Op\'er.},
  {\bf 17}:345--349, 1975.


\bibitem{Tho92} C.~Thomassen. The even cycle problem for directed
  graphs. \emph{J. Amer. Math. Soc.}, {\bf 5}:217--219, 1992.

\bibitem{Tof75} 
B.~Toft.
\newblock On Colour-critical Hypergraphs.
\newblock {\em  Colloq. Math. Soc.}, Janos Bolyai {\bf 10}:1445--1457, 1975. 

\bibitem{ZLW02} 
Z. Zhang, L. Liu, and J. Wang.
\newblock Adjacent strong edge coloring of graphs. 
\newblock {\em Applied Math. Lett.}, {\bf 15}(5):623--626, 2002.



\end{thebibliography}
\end{document}